\newif\ifsinglecolumn  %
\newif\ifrelaxedv  %
\newif\ifextendedv  %
\newif\ifmargincomments %
\newif\ifshortest %
\newcommand{\pushright}[1]{\ifmeasuring@#1\else\omit\hfill$\displaystyle#1$\fi\ignorespaces}
\newcommand{\pushleft}[1]{\ifmeasuring@#1\else\omit$\displaystyle#1$\hfill\fi\ignorespaces}
\newtheorem{theorem}{Theorem}[section]
\newtheorem{definition}[theorem]{Definition}
\newtheorem{lemma}[theorem]{Lemma}
\newcommand{\transpose}[1]{\left(#1\right)^T\!\!}
\newcommand{\revIJRR}[1]{#1}
\newcommand{\revTCNS}[1]{#1}
\newcommand{\revTCNSII}[1]{#1}
\newcommand{\jvspace}[1]{\ifrelaxedv\else\vspace{#1}\fi}
\newcommand{\V}{\mathcal V}
\DeclareMathOperator*{\argmin}{arg\,min}
\begin{document}
	\title{\LARGE \bf On the interaction between Autonomous Mobility-on-Demand systems and the power network: models and coordination algorithms}
	\author{Federico Rossi, Ramon Iglesias,
		Mahnoosh Alizadeh and Marco Pavone\IEEEauthorrefmark{1}
		\thanks{Federico Rossi and Marco Pavone are with the Department of Aeronautics and Astronautics, Stanford University, Stanford (CA) 94305. Email: \{frossi2,pavone\}@stanford.edu.} 
		\thanks{Ramon Iglesias is with the Department of Civil and Environmental Engineering, Stanford University, Stanford (CA) 94305. Email: rdit@stanford.edu.}
		\thanks{Mahnoosh Alizadeh is with the 
Electrical \& Computer Engineering Department, University of California, Santa Barbara, Santa Barbara, CA 93106. Email: alizadeh@ucsb.edu.}\thanks{This research was supported by the National Science Foundation under CAREER Award CMMI-1454737 and by the Toyota Research Institute (TRI). This article solely reflects the opinions and conclusions of its authors and not NSF, TRI or any other Toyota entity.}
\thanks{\IEEEauthorrefmark{1} Corresponding author.}
}

\maketitle

\begin{abstract}
We study the interaction between a fleet of electric self-driving
vehicles servicing on-demand transportation requests (referred to as
Autonomous Mobility-on-Demand, or AMoD, system) and the electric power
network. We propose a joint model that captures the coupling between
the two systems stemming from the vehicles' charging requirements,
capturing time-varying customer demand, battery
depreciation, and power transmission constraints. \revTCNS{First, we show that
the model is amenable to efficient optimization. Then, we prove that the
socially optimal solution to the joint problem is a general equilibrium
if locational marginal pricing is used for electricity. Finally, we
show that the equilibrium can be computed by selfish transportation and
generator operators (aided by a non-profit ISO) without sharing private
information. We assess the performance of the approach and its
robustness to stochastic fluctuations in demand through case studies
and agent-based simulations.} Collectively, these results provide a
first-of-a-kind characterization of the interaction between AMoD
systems and the power network, and shed additional light on the
economic and societal value of AMoD.
\end{abstract}
 \jvspace{-2mm}
\jvspace{-4mm}
\section{Introduction}
Private vehicles are major contributors to urban pollution,
 which is estimated to cause over seven million premature deaths worldwide every year. %
Plug-in electric vehicles (EVs) hold promise to significantly reduce urban pollution, both by reducing carbon dioxide emissions from internal-combustion engine vehicles, and by enabling use of renewable and low-polluting power generators as a source of energy for transportation services. %
However, at present, adoption of EVs for private mobility has been significantly hampered by customers' concerns about limited range and availability of charging infrastructure.

The emerging technology of self-driving vehicles might provide a solution to these challenges and thus might represent a key enabler for the widespread adoption of EVs. Specifically, fleets of self-driving vehicles providing on-demand transportation services (referred to as Autonomous Mobility-on-Demand, or AMoD, systems) hold promise to replace personal transportation in large cities by offering high quality of service at lower cost \cite{SpieserTreleavenEtAl2014} with positive effects on safety, parking infrastructure, and congestion.
Crucially, EVs are especially well-suited to AMoD systems. On the one hand, short-range trips typical of urban mobility are well-suited to the current generation of range-limited EVs; on the other hand, intelligent {fleet-wide} policies for rebalancing and charging can ensure that vehicles with an adequate level of charge are available to customers, virtually eliminating ``range anxiety,'' a major barrier to EV adoption. 
To fully realize this vision, however, one needs currently unavailable tools to manage the complex {\em couplings} between AMoD fleet management (e.g., for routing and charging the EVs) and the control of the power network. Specifically, one should consider

\begin{enumerate} 
\item {\em Impact of transportation network on power network}: Concurrent charging of large numbers of EVs can have significant effects both on the stability of the power network and on the local price of electricity (including at the charging stations) \cite{Sioshansi2012,
AlizadehWaiEtAl2016,HadleyTsvetkova2009}. For example, \cite{HadleyTsvetkova2009} shows that in California a 25\% market penetration of (non-autonomous) EVs with fast chargers, in the absence of smart charging algorithms, would increase overall electricity demand in peak load by about 30\%, and electricity prices by almost 200\%.%
\item {\em Impact of power network on transportation network}: Electricity prices can significantly affect travel patterns for EVs. \cite{AlizadehWaiEtAl2016} shows that changes in electricity prices can radically alter the travel patterns and charging schedules of fleets of EVs in a simplified model of the San Francisco Bay Area. This, in turn, would affect electricity prices in a complex feedback loop.
\end{enumerate}

The key idea behind this paper is that, by intelligently routing fleets of autonomous EVs and, in particular, by harnessing the flexibility offered by the routes and schedules for the empty-traveling vehicles, one can {\em actively} control such complex couplings and guarantee high-performance for the overall system (e.g., high passenger throughput, lower electricity costs, and increased integration of renewable energy sources). Additionally, autonomous EVs provide a unique opportunity for joint traffic and energy production management, as they could act as mobile storage devices. That is, when not used for the fulfillment of trip requests, the vehicles could be routed to target charging stations in order to either absorb excess generated energy at time of low power demand (by charging) or inject power in the power network at times of high demand (by discharging), \revTCNS{flattening the  "duck curve" \cite{Denholm2015} and reducing the use of expensive and polluting peaker plants}. 

\emph{Literature review}: 
\revIJRR{Control of AMoD systems has been addressed in multiple lines of work,
including queueing-theoretical approaches \cite{IglesiasRossiEtAl2017},
network flow approaches \ifshortest\cite{RossiZhangEtAl2017}\else\cite{PavoneSmithEtAl2012,RossiZhangEtAl2017}\fi,
integer linear programming and model-predictive control approaches \ifshortest\cite{Alonso-MoraSamaranayakeEtAl2017} \else\cite{ZhangRossiEtAl2016b,Alonso-MoraSamaranayakeEtAl2017}\fi,
and simulation-based approaches \ifextendedv\cite{MaciejewskiBischoff2017,LevinKockelmanEtAl2017,FiedlerCertickyEtAl2018}\else\cite{LevinKockelmanEtAl2017,FiedlerCertickyEtAl2018}\fi.
However, throughout these works, AMoD systems are assumed to have no impact on the electric power network. }

The
integration of {\em non-autonomous} EVs within the power network has been addressed in three main lines of work. A first line of work addresses the problem of scheduling charging of EVs (i.e., optimizing the charging profile in \emph{time}) under the assumption that the vehicles' charging schedule has no appreciable effect on the power network \cite{RoteringIlic2011,TusharSaadEtAl2012}. This assumption is also commonly made when selecting the locations of charging stations (i.e., optimizing the charging profile in \emph{space}) \cite{GoekeSchneider2015,PourazarmCassandrasEtAl2016}.
A high penetration of EVs would, however, significantly affect the power network. Thus, a second line of work investigates the effects of widespread adoption of EVs on key aspects such as wholesale prices and reserve margins, for example in macroeconomic \cite{HadleyTsvetkova2009} and game-theoretical \cite{Sioshansi2012,WangLinEtAl2010} settings. %
Accordingly, \cite{AlizadehWaiEtAl2016} investigate joint models for EV routing and power generation/distribution aimed at driving the system toward a socially-optimal solution, \revTCNS{and show that the social optimum can be enforced as a general economic equilibrium}.
Finally, a third line of work investigates the potential of using EVs to regulate the power network and satisfy short-term spikes in power demand. The macroeconomic impact of such schemes (generally referred to as Vehicle-To-Grid, or V2G) has been studied in \cite{KemptonTomic2005a}, where it is shown that widespread adoption of EVs and V2G could foster significantly increased adoption of wind power. Going one step further, \cite{KhodayarWuEtAl2013} proposes a unified model for EV fleets and the power network, and derives a joint dispatching and routing strategy that maximizes social welfare  (i.e., it minimizes the \emph{overall} cost borne by all participants, as opposed to maximizing individual payoffs). However, \cite{KemptonTomic2005a} does not capture the \emph{spatial} component of the power and transportation networks, while \cite{KhodayarWuEtAl2013} assumes that the vehicles' schedules are fixed.

The objective of this paper is to investigate the interaction between AMoD and the electric power network (jointly referred to as Power-in-the-loop AMoD, or P-AMoD, systems) in terms of \revTCNS{modeling, algorithmic, and economic tools} to effectively manage their couplings\ifextendedv~(Figure \ref{fig:overviewProblem})\fi.  
 Our work improves upon the state of the art (in particular, \cite{AlizadehWaiEtAl2016})
along three main dimensions: (i) it provides rigorous models for a fleet of \emph{shared} and \emph{autonomous} EVs; 
(ii) it provides {efficient algorithms} that can scale to large-scale instances; and (iii) it characterizes the vehicles' ability to return power to the power network through vehicle-to-grid (V2G) schemes, and its economic benefits.

\ifextendedv
\begin{figure}[!htb]
 \centering
 \includegraphics[width = .5\textwidth]{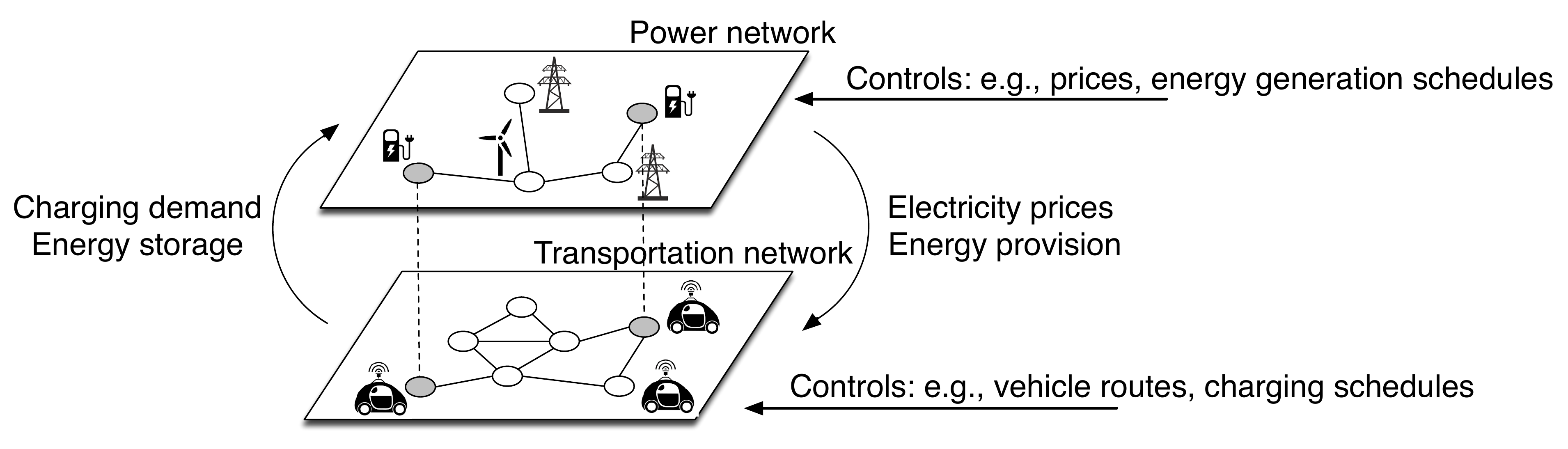}
 \caption{Couplings between an AMoD system and the electric power network. The system-level control of Power-in-the-loop AMoD systems entails the {\em coordinated} selection of routes for the autonomous vehicles, charging schedules, electricity prices, and energy generation schedules, among others.}
 \label{fig:overviewProblem}
\end{figure}
\fi

\emph{Statement of contributions}:  First, we propose a joint model for P-AMoD systems. The model subsumes existing network flow models for AMoD systems and DC models for the power network, and it captures time-varying customer demand and electricity generation costs, congestion in the road network \revTCNSII{(through a simplified threshold model)}, vehicle battery depreciation, power transmission constraints on the transmission lines, and transformer capacity constraints induced by the distribution network. 
\revTCNS{Second, we leverage the model to design tools that optimize the operations of P-AMoD systems and, in particular, maximize social welfare. %
We show that the socially-optimal solution to the P-AMoD problem is a general economic equilibrium under the ubiquitously-used Locational Marginal Pricing electricity pricing scheme - therefore, the social optimum can be realized in the realistic case where transportation and power generator operators are self-interested.
We also propose a distributed privacy-preserving algorithm that the transportation and power network operators can employ to find the equilibrium (specifically, compute the market clearing prices) without disclosing their private information.}
Third, we apply the model and algorithms to a case study of a hypothetical deployment of an AMoD system in Dallas-Fort Worth, TX.
We show that coordination between the AMoD system and the electric power network can have a significant positive impact on the price of electricity (remarkably, the overall electricity expenditure in the presence of the AMoD system can be \emph{lower} than in the case where no vehicles are present, despite the increased demand), while retaining \emph{all} the convenience and sustainability benefits of AMoD. This suggests that the societal value of AMoD systems spans beyond mobility: properly coordinated, AMoD systems can deliver significant benefits to the wider community by helping increase the efficiency of the power network.
\revTCNS{Finally, we show through agent-based mesoscopic simulations that a receding-horizon implementation of the proposed algorithm is highly robust to stochastic fluctuations in demand for transportation and for power. The simulations  
 show that, in absence of coordination, large-scale adoption of electric AMoD can cause widespread blackouts and increase electricity prices by almost 50\%; conversely, the receding-horizon P-AMoD algorithm is able to maintain electricity prices constant, despite the substantial increase in power demand.}

\revIJRR{A preliminary version of this paper was presented at the 2018 Robotics: Science and Systems conference. In this revised and extended version, we provide as additional contributions (i) a rigorous proof that the socially-optimal solution is a general equilibrium, (ii) a privacy-preserving distributed optimization algorithm, \revTCNS{(iii) an extended discussion of the limitations and assumptions of the proposed model,} (iv) additional numerical results, and (v) proofs of all theorems.}

\emph{Organization}: The remainder of this paper is organized as follows. In Section \ref{sec:model} we present a \revTCNS{network flow model for P-AMoD} that captures the interaction between an AMoD system and the power network. %
In Section \ref{sec:equilibrium}, we  show that the socially optimal solution to the P-AMoD problem is a general equilibrium and propose a privacy-preserving distributed optimization algorithm.
In Section \ref{sec:num}, we evaluate our model and algorithm on a case study of Dallas-Fort Worth.
In Section \ref{sec:conclusions},
we draw conclusions and discuss directions for future work. Finally, in the Appendix, we present agent-based simulations and proofs of all theorems.

\jvspace{-1mm}
\section{Model Description and Problem Formulation} \label{sec:model}

We propose a network flow-based model that captures the interaction between an AMoD system and the power network.
The model consists of two parts.

First, we extend the model in \cite{RossiZhangEtAl2017} to a time-varying\revIJRR{, charge-aware} network flow model of an AMoD system with EVs.
We assume that a Transportation Service Operator (TSO) manages the AMoD system in order to fulfill passenger trip requests within a given road network. 
Road links are subject to congestion,
and trip requests arrive according to an {\em exogenous} dynamical process. 
The TSO must not only compute the routes for the autonomous EVs (i.e. \emph{vehicle routing}), but also issue tasks and routes for empty vehicles in order to realign the fleet with the asymmetric distribution of trip demand (i.e. {\em vehicle rebalancing}).
Due to limited battery capacity, the EVs need to periodically charge at charging stations. The price of electricity varies between charging stations -- the charging schedule is determined by the TSO in order to minimize the fleet's operational cost.

The price of electricity itself is a result of the power network operation to balance supply and demand, and varies across the power grid.
Thus, we next review the DC power flow model of the power network and the economic dispatch problem used to calculate market clearing prices for electricity.
The power transmission network comprises spatially-distributed
 energy providers that are connected to spatially-distributed
power network users through high-voltage transmission lines. Transmission capacities (dictated chiefly by thermal considerations) limit the amount of power that can be transferred on each transmission line. 
Load buses are connected to charging stations and other sources of power demand through the distribution systems: these systems induce constraints on the amount of power that can be served to each load bus. 
Power demands other than those from charging stations are regarded as exogenous parameters in this paper.
The power network is controlled by a not-for-profit Independent System Operator (ISO), \revTCNS{which manages a competitive market platform for trading electric energy.} 
The ISO also determines prices at the load buses (and, consequently, at the charging stations) so as to  \revTCNS{achieve market clearing and} guarantee grid reliability while minimizing the overall generation cost (a problem known as \emph{economic dispatch}).

The vehicles' charging introduces a \ifextendedv critical \fi coupling between the transportation and the power networks. The power demands due to charging influence the local price of electricity set by the ISO -- the prices, in turn, affect the optimal charging schedule computed by the TSO. 
Accordingly, we conclude this section by describing the interaction between the two models, and we propose a joint model for Power-in-the-loop AMoD.%

\subsection{Network Flow Model of an AMoD system}
\label{sec:fluidmodel}
We consider a time-varying, finite-horizon model. 
The time horizon of the problem is discretized in $T$ time intervals, \revIJRR{each corresponding to $T_S$ seconds}; the battery charge level of the \ifextendedv autonomous \fi vehicles is similarly discretized in $C$ charge levels, each corresponding to $J_C$ joules.

\textbf{Road network}:
The road network is modeled as a directed graph $R=(\V_R,\mathcal{E}_R)$, where $\V_R$ denotes the node set and $\mathcal{E}_R\subseteq \V_R\times \V_R$ denotes the edge set. 
Nodes $v\in \V_R$ denote either an intersection, a charging station, or a trip origin/destination. Edges $(v,w)\in \mathcal{E}_R$ denote the availability of a road link connecting nodes $v$ and $w$.
For each edge, the length $d_{(v,w)}\in \mathbb{R}_{\geq 0}$ determines the mileage driven along the road link; the traversal time $t_{(v,w)}\in\{1,\ldots,T\}$ characterizes the travel time on the road link in absence of congestion;
the energy requirement $c_{(v,w)}\in\{-C,\ldots,C\}$ models the energy consumption (i.e., the number of charge levels) required to traverse the link in absence of congestion;
and the capacity $\overline f_{v,w} \in \mathbb{R}_{\geq 0}$ captures the maximum vehicle flow rate (i.e., the number of vehicles per unit of time) that the road link can accommodate on top of exogenous traffic without experiencing congestion.

Vehicles traversing the road network can recharge and discharge their batteries at charging stations, whose locations are modeled as a set of nodes $\mathcal{S}\subset \V_R$.
Each charging station $s\in\mathcal{S}$ is characterized by a charging rate $\delta c_s^+\in \{1,\ldots,C\}$, a discharging rate $\delta c_s^-\in\{-C,\ldots,-1\}$, a time-varying charging price $p_{s}^+(t)\in\mathbb{R}$, a time-varying discharging price $p_{s}^-(t)\in\mathbb{R}$, and  vehicle capacity $\overline S_{s}\in \mathbb{N}$.
The charging and discharging rates $\delta c_s^+, \delta c_s^-\in \{1,\ldots,C\}$ correspond to the amount of energy (in charge levels) that the charger can provide to a vehicle (or, conversely, that a vehicle can return to the power grid) in one unit of time. For simplicity, we assume that the charging rates are fixed; however, the model can be extended to accommodate variable charging rates.
The charging and discharging prices $p_{s}^+(t)$ and $p_{s}^-(t)$ capture the cost of one discrete unit charge level (or, conversely, the payment the vehicles receive for returning one unit charge level to the grid) at time $t$; in this paper, we assume that $p_{s}^+(t)=p_{s}^-(t)$ \revIJRR{(in accordance with the assumption of an arbitrage-free market)}.
The vehicle capacity $\overline S_s$ models the maximum number of vehicles that can simultaneously charge or discharge at station $s$. \revIJRR{Charging and discharging (due both to driving activity and to vehicle-to-grid power injection) cause wear in the vehicles' batteries. The battery depreciation per unit charge or discharge is denoted as $d_B$. Battery depreciation captures the cost of replacing a battery at the end of its useful life; note, however, that the vehicle's battery capacity is assumed to remain constant during the model's finite horizon. }

\textbf{Expanded AMoD network}: We are now in a position to rigorously define the network flow model for the AMoD system.
 We introduce an \emph{expanded} AMoD network modeled as a directed graph $G = (\V,\mathcal{E})$.
  The graph $G$ captures the time-varying nature of the problem and tracks the battery charge level of the autonomous vehicles. Specifically, nodes $\mathbf{v}\in\V$ model physical locations at a given time and charge level, while edges $e\in\mathcal{E}$ model road links and charging actions at a given time and charge level. 
Formally, a node $\mathbf{v}\in \V$ corresponds to a tuple $\mathbf{v}=(v_\mathbf{v},t_\mathbf{v},c_\mathbf{v})$, where $v_\mathbf{v}\in \V_R$ is a node in the road network graph $R$; $t_\mathbf{v}\in \{1,\ldots, T\}$ is a discrete time; and $c_\mathbf{v}\in \{1, \ldots, C\}$ is a discrete charge level. The edge set $\mathcal{E}$ is partitioned into two subsets, namely $\mathcal{E}_L$ and $\mathcal{E}_S$, such that $\mathcal{E}_L \cup \mathcal{E}_S = \mathcal{E}$ and $\mathcal{E}_L \cap \mathcal{E}_S = \emptyset$. Edges $e\in\mathcal{E}_L$ represent road links, whereas 
edges $e\in\mathcal{E}_S$ model the charging/discharging process at the stations. %
An edge $(\mathbf{v},\mathbf{w})$ belongs to $\mathcal{E}_L$ when (i) an edge $(v_\mathbf{v},v_\mathbf{w})$ exists in the road network graph edge set $\mathcal{E}_R$, (ii) the link $(v_\mathbf{v},v_\mathbf{w})\in \mathcal{E}_R$ can be traversed in time $t_\mathbf{w}-t_\mathbf{v}=t_{(v_\mathbf{v},v_\mathbf{w})}$, and (iii) the battery charge required to traverse the link is $c_\mathbf{v}-c_\mathbf{w}=c_{(v_\mathbf{v},v_\mathbf{w})}$. Conversely, an edge $(\mathbf{v},\mathbf{w})$ represents a charging/discharging edge in $\mathcal{E}_S$ when (i) $v_\mathbf{v}=v_\mathbf{w}$ is the location of a charging station in $\mathcal{S}$ and (ii) the charging/discharging rate at the charging location $v_\mathbf{v}$ is $(c_\mathbf{w}-c_\mathbf{v})/(t_\mathbf{w}-t_\mathbf{v})=\delta c_{v_\mathbf{v}}^+$ (charging) or $(c_\mathbf{w}-c_\mathbf{v})/(t_\mathbf{w}-t_\mathbf{v})=\delta c_{v_\mathbf{v}}^-$ (discharging). Figure \ref{fig:augmented} (left) shows a graphical depiction of the graph $G$. %

\begin{figure}[!htb]
\jvspace{-2mm}
 \centering
\ifsinglecolumn
 \includegraphics[width=.8\textwidth]{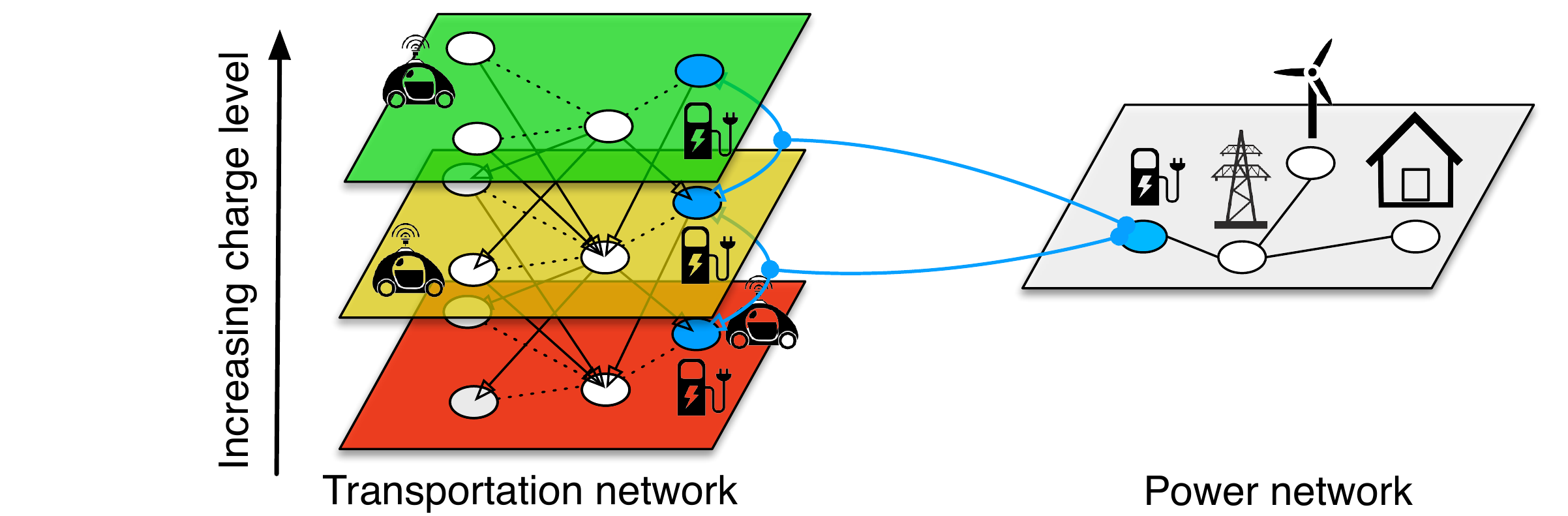}
 \else
 \includegraphics[width=.48\textwidth]{sys_expanded_rev4}
 \fi
 \caption{Augmented transportation and power networks. 
 As vehicles travel on road links (modeled by solid black arrows), their charge level decreases.
 Blue nodes represent charging stations: the flows on charging and discharging edges affect the load at the corresponding nodes in the power network. For simplicity, only one time step is shown.}
 \label{fig:augmented}
\end{figure}
\jvspace{-2mm}

\textbf{Customer and rebalancing routes}:
Transportation requests are represented by the set of tuples $\{(v_m,w_m,t_m,\lambda_m)\}_{m=1}^M$, where $v_m\in \V_R$ is the request's origin location, $w_m\in \V_R$ is the request's destination location, $t_m$ is the requested pickup time, and $\lambda_m$ is the average customer arrival rate (or simply customer rate) of request $m$ within time interval $t_m$.
Transportation requests are assumed to be known and deterministic.

The goal of the TSO is to compute a routing and recharging policy for the self-driving vehicles.
To achieve this, we model vehicle routes as network flows \cite{AhujaMagnantiEtAl1993}.
 Network flows are an \emph{equivalent representation} for routes. Indeed, any route can be represented as a network flow assuming value 1 on edges belonging to the route and 0 elsewhere; conversely, all network flows considered in this paper can be represented as a collection of weighed routes
 \cite[Ch. 3]{AhujaMagnantiEtAl1993}. %

We denote the \emph{customer flow} as the rate of customer-carrying vehicles belonging to a specific transportation request $(v_m,w_m,t_m,\lambda_m)$ traversing an edge $e\in\mathcal{E}$. Formally, for request $m\in\{1, \ldots, M\}$, the customer flow is a function $f_m(\mathbf{v},\mathbf{w}): \mathcal{E}\mapsto \mathbb{R}_{\geq 0}$, that represents the rate of customers belonging to request $m$ traveling from location $v_\mathbf{v}$ to location $v_\mathbf{w}$ (or charging/discharging at location $v_\mathbf{v}=v_\mathbf{w})$ from time $t_\mathbf{v}$ to time $t_\mathbf{w}$, with an initial battery charge of $c_\mathbf{v}$ and a final battery charge of $c_\mathbf{w}$.
Analogously, the rebalancing (or customer-empty) flow $f_0(\mathbf{v},\mathbf{w}): \mathcal{E}\mapsto \mathbb{R}_{\geq 0}$
represents the rate of empty vehicles traversing a road link or charging/discharging. 
Customer flows must satisfy a \emph{continuity} condition: customer-carrying vehicles entering a node at a given time and charge level must exit the same node at the same time and with the same charge level. Equation \eqref{eq:custbal} enforces this condition:

{%
\begin{subequations}
\jvspace{-1mm}
\begin{align}
&\sum_{{\mathbf{u}: (\mathbf{u},\mathbf{v})\in \mathcal{E}}} f_m(\mathbf{u},\mathbf{v}) + 1_{v_\mathbf{v}=v_m}1_{t_\mathbf{v}=t_m} \lambda^{c_\mathbf{v}, \text{in}}_m = \sum_{\mathclap{\mathbf{w}: (\mathbf{v},\mathbf{w})\in \mathcal{E}}} f_m(\mathbf{v},\mathbf{w}) 
\nonumber \\
&\quad+  1_{v_\mathbf{v}=w_m} \lambda^{t_\mathbf{v},c_\mathbf{v}, \text{out}}_m  \,\,\quad  \quad \forall \mathbf{v}\in \mathcal{V}, m\!\in\! \{1,\ldots, M\},  \label{eq:custbal:continuity}\\
&\sum_{c=1}^C \lambda_m^{c,\text{in}} \!=\! \lambda_m, \, \sum_{t=1}^T\sum_{c=1}^C \lambda_m^{t,c,\text{out}} \!=\! \lambda_m  \, \forall m\!\in\! \{1,\ldots,M\},  \label{eq:custbal:intensity}
\end{align}
\label{eq:custbal}
\end{subequations}
}
where the variable $\lambda^{c, \text{in}}_m$ denotes the customer rate departing with charge level $c$ and the variable $\lambda^{t,c, \text{out}}_m $ denotes the customer rate reaching the destination at time $t$ with charge level $c$; both are optimization variables.
Function $1_{x}$ denotes the indicator function of the Boolean variable $x = \{\text{true, false}\}$, that is $1_{x}=1$ if $x$ is true, and $1_x=0$ if $x$ is false. 

Rebalancing flows must satisfy a continuity condition analogous to the one for the customer flows. In addition, rebalancing flows must satisfy a \emph{consistency} condition representing the fact that a customer may only depart the origin location if an empty vehicle is available.
Finally, the initial position and charge level of the vehicles are fixed; the final position and charge level are optimization variables  (possibly subject to constraints, e.g., on the minimum final charge level).
The constraints for the initial and final positions of the rebalancing vehicles at each node $\mathbf{v}\in\V$ are captured by a set of functions $N_I(\mathbf{v})$ and $N_F(\mathbf{v})$, respectively. 
Formally, $N_I(\mathbf{v})$, with $t_\mathbf{v}=0$, denotes the number of rebalancing vehicles entering the AMoD system at location $v_\mathbf{v}$ at time $t_\mathbf{v}$ with charge level $c_\mathbf{v}$.
Conversely, $N_F(\mathbf{v})$, with $t_\mathbf{v}=T$ denotes the number of rebalancing vehicles at location $v_\mathbf{v}$ at time $t_\mathbf{v}$ with charge level $c_\mathbf{v}$.
For $t_\mathbf{v}\neq 0$, $N_I(\mathbf{v})=0$; for $t_\mathbf{v}\neq T$, $N_F(\mathbf{v})=0$.
The overall number of vehicles in the network is $\sum_{\mathbf{v}\in\V} N_I(\mathbf{v})$.
Equation \eqref{eq:rebbal} simultaneously enforces the rebalancing vehicles' continuity condition, consistency condition, and the constraints on the initial and final locations:

{%
\jvspace{-3mm}
\begin{align}
&\sum_{\mathclap{\mathbf{u}: (\mathbf{u},\mathbf{v})\in \mathcal{E}}} f_0(\mathbf{u},\mathbf{v}) \!+\! \sum_{m=1}^M 1_{v_\mathbf{v}=w_m} \lambda^{t_\mathbf{v},c_\mathbf{v}, \text{out}}_m + N_I(\mathbf{v}) = \label{eq:rebbal} \\
& \sum_{\mathclap{\mathclap{\mathbf{w}: (\mathbf{v},\mathbf{w})\in \mathcal{E}}}} f_0(\mathbf{v},\mathbf{w}) \!+\! \sum_{m=1}^M 1_{v_\mathbf{v}=v_m} \!1_{t_\mathbf{v}=t_m} \lambda^{c_\mathbf{v}, \text{in}}_m \!+\! N_F(\mathbf{v}) ,  \forall \mathbf{v}\in \mathcal{V}. \nonumber
\end{align}
\jvspace{-1mm}
} 

\textbf{Congestion}:
We adopt a simple \emph{threshold} model for congestion: the vehicle flow on each road link is constrained to be smaller than the road link's \revTCNS{residual capacity (i.e. the flow of autonomous vehicles that can traverse the link without inducing road congestion, once exogenous vehicle traffic is accounted for)}. \revTCNSII{Equivalently, the traffic speed is assumed to be equal to the free-flow speed whenever the vehicle flow is smaller than the road capacity, and zero whenever the vehicle flow exceeds the road capacity.} The model is analogous to the one adopted in \cite{RossiZhangEtAl2017} and is consistent with classical traffic flow theory \cite{Wardrop1952}.
\ifextendedv In classical flow theory, when the vehicle density on a road link is low, vehicles travel at the free-flow speed, which is approximately constant in this regime \cite{Kerner2009b}. As the vehicle density increases, the vehicle flow achieves an empirically observed maximum (denoted as the road \emph{capacity} in the literature). A further increase in vehicle density causes a dramatic reduction in the traffic speed and the vehicle flow, and signals the onset of congestion. The threshold model constrains traffic to remain in the uncongested regime. \fi
This simplified congestion model is adequate for our goal of computing \emph{control strategies} for the vehicles' routes and charging schedules, and ensures tractability of the resulting optimization problem. Higher-fidelity models can be used for the \emph{analysis} of the AMoD system's operations\revTCNS{: indeed, we employ the high-fidelity Bureau of Public Roads (BPR) congestion model \cite{BPR1964} in the numerical simulations in Appendix \ref{sec:realtime}}. Equation \eqref{eq:congestion} enforces the road congestion constraint:

{%
\jvspace{-1mm}
\begin{equation}
\sum_{c_\mathbf{v}=1}^C \sum_{m=0}^M\! f_m(\mathbf{v},\mathbf{w}) \leq \overline{f}_{(v_\mathbf{v},v_\mathbf{w})},
 \forall (v_\mathbf{v},v_\mathbf{w}) \!\in\! \mathcal{E}_R, t_\mathbf{v}\in \{1,\ldots,T\}.
\label{eq:congestion}
\end{equation}
\jvspace{-3mm}
}

Charging stations can simultaneously accommodate a limited number of vehicles. The station capacity constraint is enforced with Equation \eqref{eq:chargercap}:

{%
\jvspace{-1mm}
\begin{equation}
\sum_{
{\substack{(\mathbf{v},\mathbf{w})\in \mathcal{E}_S:\\ 
v_\mathbf{v}=v_{\mathbf{w}}=v
}}} \sum_{m=0}^M f_m(\mathbf{v},\mathbf{w})  \leq \overline{S}_{v_{\mathbf{v}}}, 
\forall v\in \mathcal{S}, t\in\{1,\ldots, T\}.
\label{eq:chargercap}
\end{equation} 
}

\revTCNS{\textbf{Flow Bundling}:
The goal of the TSO is to select variables
$\{f_m, \lambda^{c, \text{in}}_m, \lambda^{t,c, \text{out}}_m, N_F\}$
so as to minimize the aggregate operational cost borne by AMoD users (which will be formally defined later in this Section).
The size of the edge set $\mathcal{E}$ is $|\mathcal{E}|=O((|\mathcal{E}_R|+|\mathcal{S}|)CT)$ (that is, the asymptotic growth of $|\mathcal{E}|$ is bounded from above by a function  $\overline k(|\mathcal{E}_R|+|\mathcal{S}|)CT $, where $\overline k$ is a positive constants), and the number of customer requests $M$ admits an upper bound $O(|\mathcal{V}_R|^2T)$, since each customer demand is associated with an origin, a destination, and a departure time. The size of the problem is dominated by the customer flow variables in the road network -- the number of such variables is $M|\mathcal{E}|=O((|\mathcal{V}_R|^2T) (|\mathcal{E}_R|+|\mathcal{S}|)CT)$.
Consider a typical problem with 25 road nodes, 200 road links, 30 charge levels, and a horizon of 20 time steps. Such a problem results in a number of variables on the order of $10^{9}$, which can not be solved even by state-of-the-art solvers on modern hardware \cite{Mittelmann2016}.

To overcome this, we propose a \emph{bundling} procedure that allows one to reduce the number of network flows to $O(|\mathcal{V}_R|)$ without loss of information. As a result, the size of the prototypical problem above is reduced to $O(10^6)$ variables, well within the reach of modern solvers. The procedure collects multiple customer demands in a single customer flow, a concept we refer to as \emph{bundled customer flow},
\begin{definition}[Bundled customer flow]
\label{def:bundledflow}
Consider the set of customer requests \linebreak$\{v_m,w_m,t_m,\lambda_m\}_{m=1}^M$. Denote the set of customer destinations as $\mathcal{D}:=\{\cup_{m=1}^M w_m\}$. %
For a given destination $d_B\in \mathcal{D}$, we define a \emph{bundled customer flow} as a function $f_{B,d_B}(\mathbf{u},\mathbf{v}): \mathcal{E}\mapsto \mathbb{R}_{\geq 0}$ that satisfies

 {%
 \jvspace{-1mm}
\begin{subequations}
\begin{align}
&\sum_{\mathclap{\mathbf{u}: (\mathbf{u},\mathbf{v})\in \mathcal{E}}} \; f_{B,d_B}(\mathbf{u},\mathbf{v}) + \sum_{\mathclap{\substack{m\in\{1,\ldots,M\}:\\ w_m=d_B}}}\;1_{v_\mathbf{v}=v_m}1_{t_\mathbf{v}=t_m} \lambda^{c_\mathbf{v}, \text{in}}_m \nonumber \\
&= \sum_{\mathclap{\mathbf{w}: (\mathbf{v},\mathbf{w})\in \mathcal{E}}}\; f_{B,d_B}(\mathbf{v},\mathbf{w}) +\sum_{\mathclap{\substack{m\in\{1,\ldots,M\}:\\w_m=d_B}}}\; 1_{v_\mathbf{v}=w_m} \lambda^{t_\mathbf{v},c_\mathbf{v}, \text{out}}_m, 
\,\, \forall \mathbf{v} \in \V, \label{eq:bundledcustbal_conservation}\\
&\sum_{c=1}^C \lambda_m^{c,\text{in}} \!=\!  \sum_{t=1}^T\sum_{c=1}^C \lambda_m^{t,c,\text{out}} \!=\! \lambda_m, \forall m\in \{1,\ldots,M\}\!:\!w_m\!=\!d_B. \label{eq:bundledcustbal_sumsource}%
\end{align}
\label{eq:bundledcustbal}
\end{subequations}
 \jvspace{-3mm}
} 
\end{definition}

Intuitively, the bundled customer flow for a given destination $d_B$ can be thought of as the sum of customer flows (i.e., network flows satisfying Equation \eqref{eq:custbal}) for \emph{all} customer requests whose destination is node $d_B$.
A bundled customer flow is an \emph{equivalent representation} for a set of customer flows belonging to customer requests sharing the same destination.
The next lemma formalizes this intuition.

\begin{lemma}[Equivalency between customer flows and bundled customer flows]
\label{lemma:bundledflows}
Consider a network $G(\mathcal{V},\mathcal{E})$ and a set of customer requests $\{v_m,w_m,t_m,\lambda_m\}_{m=1}^M$.
Assume there exists a bundled customer flow $\{f_{B,d_B}(\mathbf{u},\mathbf{v})\}_{(\mathbf{u},\mathbf{v})\in\mathcal{E}}$ that satisfies Equation \eqref{eq:bundledcustbal} for a destination $d_B\in\mathcal{D}$. Then, for each customer request $\{v_m,d_B,t_m,\lambda_m\}$ with destination $d_B$, there exists a customer flow $f_m(\mathbf{u},\mathbf{v})$ that satisfies Equation \eqref{eq:custbal}. Furthermore, for each edge $(\mathbf{u},\mathbf{v})\in\mathcal{E}$, $f_{B,d_B}(\mathbf{u},\mathbf{v}) = \sum_{m\in \{1,\ldots,M\} :w_m=d_B}f_m(\mathbf{u},\mathbf{v})$.
\end{lemma}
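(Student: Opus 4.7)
The plan is to decompose the bundled flow into a family of source-to-sink path flows via the classical flow decomposition theorem \cite[Ch.~3]{AhujaMagnantiEtAl1993}, and then attribute each path flow to an individual request based on the per-request virtual source node it visits. First I would augment $G$ by adjoining a super-source $\sigma$, a super-sink $\tau$, and, for every request $m$ with $w_m=d_B$, a per-request virtual source node $\sigma_m$. I would connect $\sigma\to\sigma_m$ with flow $\lambda_m$ and $\sigma_m\to(v_m,t_m,c)$ with flow $\lambda_m^{c,\text{in}}$ for each charge level $c$; at the destination side, I would add an edge $(d_B,t,c)\to\tau$ carrying flow $\sum_{m:w_m=d_B}\lambda_m^{t,c,\text{out}}$ for every $(t,c)$. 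Equation \eqref{eq:bundledcustbal_conservation} guarantees that $f_{B,d_B}$, together with these boundary flows, conserves flow at every node in $\mathcal{V}$ and at each $\sigma_m$, while \eqref{eq:bundledcustbal_sumsource} guarantees that the outflow at $\sigma$ matches the inflow at $\tau$. Since every edge in $\mathcal{E}$ strictly advances the time coordinate (by construction of road and charging edges in Section \ref{sec:fluidmodel}), the augmented network is acyclic, so the decomposition consists purely of $\sigma$-to-$\tau$ path flows.

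Next, I would partition the resulting path flows by the unique virtual source they traverse: each $\sigma$-to-$\tau$ path has the form $\sigma\to\sigma_m\to(v_m,t_m,c)\to\cdots\to(d_B,t,c')\to\tau$, so its second edge unambiguously attributes it to the customer request indexed by $m$. For each request $m$ with $w_m=d_B$, I would define $f_m(\mathbf{u},\mathbf{v})$ on edges $(\mathbf{u},\mathbf{v})\in\mathcal{E}$ as the sum of path-flow values contributed by the paths assigned to $m$. I would retain $\lambda_m^{c,\text{in}}$ from \eqref{eq:bundledcustbal}, which by construction equals the flow on the augmented edge $\sigma_m\to(v_m,t_m,c)$, and I would redefine $\lambda_m^{t,c,\text{out}}$ as the total flow of request-$m$ paths terminating at $(d_B,t,c)$. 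Equation \eqref{eq:custbal:continuity} then holds at every node of $\mathcal{V}$ because $f_m$ is a sum of path flows that conserve flow at every intermediate node, and \eqref{eq:custbal:intensity} holds because conservation at $\sigma_m$ and at $\tau$ forces both $\sum_c\lambda_m^{c,\text{in}}$ and $\sum_{t,c}\lambda_m^{t,c,\text{out}}$ to equal $\lambda_m$. The identity $f_{B,d_B}(\mathbf{u},\mathbf{v})=\sum_{m:w_m=d_B}f_m(\mathbf{u},\mathbf{v})$ on every $(\mathbf{u},\mathbf{v})\in\mathcal{E}$ is then immediate from the decomposition.

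The most delicate point is that nothing in the bundled equation \eqref{eq:bundledcustbal} ties a specific unit of flow arriving at the destination to a specific request: a path originating at $\sigma_m$ may well terminate at a slot $(d_B,t,c)$ whose contribution in \eqref{eq:bundledcustbal} was originally associated with a different request $m'\neq m$. This is precisely why the per-request intensities $\lambda_m^{t,c,\text{out}}$ must be \emph{redefined} from the decomposition rather than carried over verbatim from \eqref{eq:bundledcustbal}, which is permitted because these values are themselves optimization variables in \eqref{eq:custbal}. The per-request virtual source nodes $\sigma_m$ are the key device that preserves the per-request identity of the source-side flow throughout the decomposition, and summing \eqref{eq:bundledcustbal_sumsource} over $m$ ensures that the aggregate destination rates are respected, so that \eqref{eq:custbal} is satisfied by each $f_m$ with the redefined destination intensities.
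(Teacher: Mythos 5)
Your proposal is correct and follows essentially the same route as the paper's proof: decompose the bundled flow into path flows via the flow decomposition algorithm, attribute each path to a request according to its origin, sum the paths per request, and redefine the destination intensities $\lambda_m^{t,c,\text{out}}$ from the decomposition. The only cosmetic difference is that you disambiguate requests sharing an origin and departure time via per-request virtual source nodes, whereas the paper merges such requests into one without loss of generality; your explicit observation that the out-intensities cannot be carried over verbatim is a point the paper leaves implicit.
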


\emph{Proof sketch:}
The flow decomposition algorithm \cite{AhujaMagnantiEtAl1993} is used to decompose the bundled customer flow into a collection of path flows, each with a single origin node $\mathbf{v}\in\mathcal{V} $ and destination node $\mathbf{w}\in\mathcal{V}$ with $v_\mathbf{w}=d_B$. The customer flow for customer request $(v_m,d_B,t,\lambda)$ is then obtained as the sum of path flows leaving nodes $\{\mathbf{v}=(v_m,t_m,c)\}_{c=1}^C$ with total intensity $\lambda_m$. A rigorous proof is reported in the Appendix.

We can leverage the result in Lemma \ref{lemma:bundledflows} to restate the transportation network model in terms of bundled customer flows, thus dramatically decreasing the model size.
To do so, we note that, according to Lemma \ref{lemma:bundledflows}, Equation \eqref{eq:custbal} is equivalent to Equation \eqref{eq:bundledcustbal}. 
Also, in Equations \eqref{eq:congestion} and \eqref{eq:chargercap}, the quantity $f_m$ only appears as $\sum_m f_m$ and, in accordance with Definition \ref{def:bundledflow} and Lemma \ref{lemma:bundledflows}, $\sum_{m=1}^M f_m = \sum_{d_B} f_{B,d_B} \forall (\mathbf{u}, \mathbf{v})\in\mathcal{E}$. Accordingly, one can replace every occurrence of $\sum_{m=1}^{M} f_m$ with $\sum_{d_B} f_{B,d_B}$ in Equations \eqref{eq:congestion} and \eqref{eq:chargercap} and obtain an equivalent representation of the corresponding constraints.}

\revTCNS{\textbf{Network flow model of an AMoD system}:
The travel time $T_M$ experienced by customers, a proxy for customer welfare, and the overall mileage $D_V$ driven by (both customer-carrying and empty) vehicles, a proxy for vehicle wear, are given by

{%
\jvspace{-1mm}
\begin{align*}
T_M&=\sum_{(\mathbf{v},\mathbf{w})\in \mathcal{E}} t_{\mathbf{v},\mathbf{w}} \sum_{d\in\mathcal{D}}  f_{B,d_b}(\mathbf{v},\mathbf{w}),   \\
 D_V&=\sum_{(\mathbf{v},\mathbf{w})\in \mathcal{E}} d_{v_\mathbf{v},v_\mathbf{w}} \left(f_0(\mathbf{v},\mathbf{w}) + \sum_{d\in\mathcal{D}}  f_{B,d_b}(\mathbf{v},\mathbf{w}) \right).
\end{align*}
}

Note that, for charging edges, $d_{v_\mathbf{v},v_\mathbf{w}}=0$.
The total cost of electricity incurred by the vehicles (including any credit from selling electricity to the power network) is

{%
\vspace{-1mm}
\begin{equation*}
V_E=\sum_{(\mathbf{v},\mathbf{w})\in\mathcal{E}_S} \left(f_0(\mathbf{v},\mathbf{w}) + \sum_{d\in\mathcal{D}}  f_{B,d_b}(\mathbf{v},\mathbf{w}) \right) \delta c_{v_\mathbf{v}} p_{(\mathbf{v},\mathbf{w})},
\end{equation*}
}

where $\delta c_{v_\mathbf{v}}=\delta c_{v_\mathbf{v}}^+$ and $p_{(\mathbf{v},\mathbf{w})}=p_{v_\mathbf{v}}^+$ if $c_\mathbf{w}>c_\mathbf{v}$, $\delta c_{v_\mathbf{v}}=\delta c_{v_\mathbf{v}}^-$ and $p_{(\mathbf{v},\mathbf{w})}=p_{v_\mathbf{v}}^-$ otherwise. 

The overall battery depreciation due to charging and discharging is

{
\jvspace{-1.5mm}
\begin{align*}
V_B=&d_B \left[ \sum_{(\mathbf{v},\mathbf{w})\in\mathcal{E}_S} \left(f_0(\mathbf{v},\mathbf{w}) + \sum_{d\in\mathcal{D}}  f_{B,d_b}(\mathbf{v},\mathbf{w}) \right) |\delta c_{v_\mathbf{v}}| \nonumber \right. \\
& \left.+ \sum_{(\mathbf{v},\mathbf{w})\in\mathcal{E}_L}  \left(f_0(\mathbf{v},\mathbf{w}) + \sum_{d\in\mathcal{D}}  f_{B,d_b}(\mathbf{v},\mathbf{w}) \right) |c_{(v_\mathbf{v},v_\mathbf{w})}| \right].
\end{align*}
\jvspace{-1mm}
}

\revIJRR{(Note that battery depreciation accounts for both charging and discharging, since battery life is determined by the number of charging/discharging cycles incurred by the battery cells). }

The goal of the TSO is to solve the Vehicle Routing and Charging problem, that is, to minimize the aggregate societal cost borne by the AMoD users while satisfying all operational constraints. We define the customers' value of time (i.e., the monetary loss associated with traveling for one time interval) as $V_T$ and the operation cost per kilometer of the vehicles (%
excluding electricity costs) as $V_D$. 
We are now in a position to state the TSO's Vehicle Routing and Charging problem:

{%
\jvspace{-1mm}
\begin{subequations}
\begin{align}
& \underset{f_0, f_{B,d_B}, \lambda^{c, \text{in}}_m, \lambda^{t,c, \text{out}}_m, N_F}{\text{minimize}} 
	& &V_D D_V + V_E + V_B + V_T T_M,\\
	& \text{subject to} &&  \eqref{eq:bundledcustbal}, \eqref{eq:rebbal}, \eqref{eq:congestion}, \text{ and }\eqref{eq:chargercap}.
\end{align}
\label{eq:VRCP}
\end{subequations}
\jvspace{-1mm}
}
}

\revTCNS{The optimization problem in \eqref{eq:VRCP} can be solved with a number of variables on the order of $O((|\mathcal{V}_R|+1) |\mathcal{E}| + MC + |\mathcal{V}_R| C)$. To see this, note that in Equation \eqref{eq:bundledcustbal} the variables $\{ \lambda_m^{t,c,\text{out}}\}_{\{m,t,c\}}$ only appear as part of the sum $ \sum_{{{m\in\{1,\ldots,M\}: w_m=d_B}}}  \lambda^{t,c, \text{out}}_m$ and therefore may be replaced by the smaller set of variables $\{ \lambda_{d_B}^{t,c,\text{out}}\}_{\{d_B,t,c\}}$, where $\lambda_{d_B}^{t,c,\text{out}}:= \sum_{{{m\in\{1,\ldots,M\}: w_m=d_B}}}  \lambda^{t,c, \text{out}}_m$, without loss of generality.
The number of customer flow variables. which dominate the problem size, grows linearly with the number of nodes $|\mathcal{V}_R|$ and does not depend on the time horizon $T$.}

\subsection{\revTCNS{DC model of power network}}
\label{sec:powermodel}

In this paper, the power network is modeled according to the well-known DC model \cite[Ch. 6]{KirschenStrbac2004}, which, by assuming constant voltage magnitudes and determining the power flow on transmission lines solely based on voltage phase angles, represents an approximation to the higher-fidelity AC flow model \cite{GloverSarmaEtAl2011}.
\revTCNS{We remark that any convex optimal power flow model could be readily used in lieu of the DC model, since convex models are also amenable to efficient optimization and can be used to compute Locational Marginal Prices; in this paper, we focus on the DC model as a first step.}
In analogy with the treatment of the AMoD model, we discretize the time horizon of the problem in $T$ time steps. 
The power grid is modeled as an undirected graph $P=(\mathcal{B},\mathcal{E}_P)$, where $\mathcal{B}$ is the node set, commonly referred to as buses in the power engineering literature, and $\mathcal{E}_P\subseteq \mathcal{B} \times \mathcal{B}$ is the edge set, representing the transmission lines.
The subsets of buses representing generators and loads are defined as $\mathcal{G} \subset \mathcal{B}$ and $\mathcal{L} \subset \mathcal{B}$, respectively. %
Generators produce power and deliver it to the network, while loads absorb power from the network. 
Each generator $g\in \mathcal{G}$ is characterized by a maximum output power $\overline p_g(t)$, a minimum output power $\underline p_g(t)$, a unit generation cost $o_g(t)$, and maximum ramp-up and ramp-down rates $p_g^+(t)$ and $p_g^-(t)$, respectively. %
Transmission lines $e\in \mathcal{E}_P$ are characterized by a reactance $x_e$ and a maximum allowable power flow $\overline p_e$ (due chiefly to thermal constraints). The reactance and the maximum allowable power flow do not vary with time.
Each load node $l\in \mathcal{L}$ is characterized by a required power demand $d_l(t)$. The distribution network is not modeled explicitly; however, thermal constraints due to the substation transformers are modeled by an upper bound $\overline{d}_l(t)$ on the power that can be delivered at each load node.

We define 
a generator power function $p:(\mathcal{G},\{1,\ldots,T\})\mapsto \mathbb{R}_{\geq 0}$, and a phase angle function $\theta: (\mathcal B,\{1,\ldots,T\}) \mapsto \mathbb{R}$.
The generation cost is defined as

{%
\jvspace{-1mm}
\[
C_G = \sum_{t=1}^T\sum_{g\in\mathcal{G}} o_g(t) p(g,t).
\]
}
The Economic Dispatch problem entails minimizing the generation cost subject to a set of feasibility constraints \cite{KirschenStrbac2004}:

{%
 \jvspace{-1mm}
\begin{subequations}
\begin{align}
& \underset{p, \theta }{\text{min.}} 
  & &C_G,\\
  & \text{s.t.} && \!\!\!\!\! \sum_{\mathclap{(u,v)\in \mathcal{E}_P}} \frac{\theta(u,t)-\theta(v,t)}{x_{u,v}} + 1_{v\in \mathcal{G}}p(v,t) = 1_{v\in \mathcal{L}}d_v(t)  \nonumber \\
  &&& \!\!\!\! \!\!+\sum_{\mathclap{(v,w)\in \mathcal{E}_P}} \frac{\theta(v,t)-\theta(w,t)}{x_{v,w}}\,, \forall v\in \mathcal{B}, t\in\{1,\ldots,T\}\,, \label{eq:powerbal}\\
&&& \!\!\!\!\! -\overline p_{b_1,b_2}\leq [{\theta(b_1,t)-\theta(b_2,t)}]/{x_{b_1,b_2}} \leq \overline p_{b_1,b_2}\hfill\,, \nonumber \\
&&& \pushright{\forall (b_1,b_2)\in \mathcal{E}_P, t\!\in\! \{1,\ldots,T\}}, \,\,\,\label{eq:powerthermal} \\
&&& \!\!\!\!\! \underline p_g(t) \leq p(g,t)\leq \overline p_g(t), \,\forall g\in \mathcal{G},t\!\in\!\{1,\ldots,T\}, \label{eq:powergenlim}\\
&&& \!\!\!\!\! - p_g^-(t)\leq p(g,t+1) - p(g,t) \leq p_g^+(t)\,,\nonumber \\
&&&  \pushright{\forall g\in \mathcal{G},t\in\{1,\ldots,T-1\}}, \,\,\,\,\,\label{eq:powergenramp}\\
&&&\!\!\!\!\! d_l(t) \leq \overline d_l(t)\,, \quad \quad \quad \quad \forall l\in \mathcal{L},t\in\{1,\ldots,T\}. \label{eq:powerdistlim}
\end{align}
\label{eq:powerdispatch}
\end{subequations}
 \jvspace{-1mm}
}

Equation \eqref{eq:powerbal} enforces power balance at each bus based on the so-called DC power flow equations; Equation \eqref{eq:powerthermal} encodes the transmission lines' thermal constraints; Equation \eqref{eq:powergenlim} encodes the generation capacity constraints; Equation \eqref{eq:powergenramp} encodes the ramp-up and ramp-down constraints; and Equation \eqref{eq:powerdistlim} encodes the thermal constraints of substation transformers.%

\emph{Pricing}:
The unit price of electricity at the load nodes is determined through a mechanism known as Locational Marginal Pricing (LMP) \cite{KirschenStrbac2004}, \revTCNS{ubiquitously used by power network operators in the United States and Western Europe \cite{CheungShamsollahiEtAl1999}.}
The LMP at a node is defined as the \emph{marginal cost} of delivering one unit of power at the node while respecting all the system constraints.
\revIJRR{Accordingly, in this paper, the LMP at each \revTCNSII{load} bus equals the sum of the dual variables (i.e., the shadow prices) corresponding to the power injection constraint \eqref{eq:powerbal} and the substation transformer thermal constraint \eqref{eq:powerdistlim} at the same bus in the Economic Dispatch problem.
}

\subsection{Power-in-the-loop AMoD system}
\label{sec:jointmodel}

The vehicles' charging requirements introduce a \emph{coupling} between the AMoD system and the power network, as shown in Figure \ref{fig:augmented}.
The vehicles' charging schedule produces a load on the power network. Such a load affects the solution to the ISO's Economic Dispatch problem and, as a result, the LMPs. The change in LMPs, in turn, has an effect on the TSO's optimal charging schedule. In absence of coordination, this feedback loop can lead to system instability, as shown for the case of privately-owned, non-autonomous EVs in \cite{AlizadehWaiEtAl2016}. 

In this section, we formulate a \emph{joint model} for the TSO's Vehicle Routing and Charging problem and the ISO's Economic Dispatch problem. 
\revTCNS{We also formulate a cost function that captures the goal of maximizing \emph{social welfare} by minimizing the total cost of mobility (a profit-maximizing formulation would be similar) and the total cost of power generation and transmission.}
\revTCNS{The resulting optimization problem is not directly actionable, since solving it would require the TSO and the ISO to coordinate and share their private information. However, in Section \ref{sec:equilibrium}, we will show that the social optimum is also a general equilibrium if Locational Marginal Pricing is used, and that the operators can compute the equilibrium without exchanging any private information.}%

The coupling between the AMoD model and the electric power model is mediated by the charging stations. A given charging station is represented both by a node $v\in \V_R$ in the road network and by a load node $l\in \mathcal L$ in the power network.
To capture this correspondence, we define an auxiliary function $\mathcal{M}_{\text{P,R}}:\mathcal L\mapsto \{\V_R \cup \emptyset \}$. Given a load node $b\in \mathcal L$, $\mathcal{M}_{P,R}(b)$ denotes the node in $\V_R$ (if any) that represents a charging station connected to $b$. We then define two additional functions, $\mathcal{M}^+_{\text{P,G}}:( \mathcal L,\{1,\ldots,T\})\mapsto \{\mathcal{E}_S \cup \emptyset \}$ and $\mathcal{M}^-_{\text{P,G}}:(\mathcal L,\{1,\ldots,T\})\mapsto \{\mathcal{E}_S \cup \emptyset \}$. The function $\mathcal{M}^+_{\text{P,G}}$ (respectively, $\mathcal{M}^-_{\text{P,G}}$) maps a load node $l$ and a time $t$ to the set of charge (respectively, discharge) edges in $G$ corresponding to station $\mathcal{M}_{\text{P,R}}(l)$ at time $t$. Formally,
 {%
 \begin{align*}
& \mathcal{M}^+_{\text{P,G}}(l,t): \\
&\{  (\mathbf{v},\mathbf{w}) \! \in \! \mathcal{E}_S | 
 v_\mathbf{v}=v_\mathbf{w}, v_\mathbf{v} \! \in \! \mathcal{M}_{\text{P,R}}(l),
 c_\mathbf{v} \! < \! c_\mathbf{w}, t_\mathbf{v}\leq t \! < \! t_\mathbf{w}\},\\
 & \mathcal{M}^-_{\text{P,G}}(l,t): \\
 & \{ (\mathbf{v},\mathbf{w}) \! \in \! \mathcal{E}_S |
  v_\mathbf{v}=v_\mathbf{w}, v_\mathbf{v} \! \in \! \mathcal{M}_{\text{P,R}}(l), c_\mathbf{v} \! > \! c_\mathbf{w}, t_\mathbf{v}\leq t \! < \! t_\mathbf{w}\}.
 \end{align*}
 }
The load at a load bus $l$ can be expressed as the sum of two components: an exogenous demand $d_{l,e}$ \revTCNS{(which includes demand from private, non-autonomous electric vehicles)} and the load due to the chargers connected to that bus, quantitatively,
\revTCNSII{
{%
 \jvspace{-1.5mm}
\begin{align}
\label{eq:buscoupling}
d_l(t)&\!=\!d_{l,e}(t)\!+\!
{J_C \delta c^+_{\mathcal{M}_{\text{P,R}}(l)}}
\sum_{\mathclap{\substack{(\mathbf{v},\mathbf{w})\in\\ \mathcal{M}^+_{\text{P,G}}(l,t)}}} \,\,
\left(f_0(\mathbf{v},\mathbf{w}) \!+\! \sum_{d\in\mathcal{D}}  f_{B,d_b}(\mathbf{v},\mathbf{w}) \right)
\nonumber \\
 & \!\!+ {J_C \delta c^-_{\mathcal{M}_{\text{P,R}}(l)}}
\,\,\sum_{\mathclap{\substack{(\mathbf{v},\mathbf{w})\in \\ \mathcal{M}^-_{\text{P,G}}(l,t)}}}\,\,
 \left(f_0(\mathbf{v},\mathbf{w}) + \sum_{d\in\mathcal{D}}  f_{B,d_b}(\mathbf{v},\mathbf{w})\right)\!\!\!
\end{align} 
}
}
for all $l\in \mathcal{L}, t\in\{1,\ldots,T\}$.

We are now in a position to state the Power-in-the-loop AMoD (P-AMoD) problem:

{%
\jvspace{-2.5mm}
\begin{subequations}
\begin{align}
& \underset{f_0, f_{B,d_B}, \lambda^{c, \text{in}}_m, \lambda^{t,c, \text{out}}_m, N_F,\theta,p }{\text{minimize}} 
	& &  \!\!\!V_T T_{M} + V_D D_v+V_B+C_G,\\
	& \text{subject to} && \!\!\! \eqref{eq:bundledcustbal}, \eqref{eq:rebbal}, \eqref{eq:congestion}, \eqref{eq:chargercap}, \eqref{eq:powerdispatch}, \text{ and }\eqref{eq:buscoupling}. 
\end{align}
\jvspace{-7mm}
\label{eq:P-AMoD}
\end{subequations}
}
\revTCNS{
\subsection{Discussion}
\label{sec:discuss}
Some comments on the modeling assumptions and limitations of the proposed model are in order.

\paragraph{Stochasticity} A key limitation of the network flow modeling approach is that it does not capture stochastic effects, in particular it does not explicitly account for the stochasticity of the customer arrival process, which is assumed to be deterministic and known in advance. Yet, from the mesoscopic perspective of this paper, network flow models are justifiable for three main reasons. First, on the foundational side, previous work by the authors  \cite{IglesiasRossiEtAl2017} has shown that a stochastic queueing network model of an AMoD system, wherein the customer arrival process is Poisson and travel times between stations are stochastic, reduces to a (deterministic) network flow model in the (mesoscopic) limit of large fleet sizes. Notably, in such a limiting regime the network flows represent the expected values of the underlying stochastic quantities. While the extension of the analysis in  \cite{IglesiasRossiEtAl2017} to the P-AMoD setting is beyond the scope of this paper, such a connection suggests network flow models as a principled, first-order \emph{approximation} to higher-fidelity stochastic queueing models. Second, on the control side, network flow models, due to the aforementioned connection to stochastic models and their computationally-favorable (linear) structure, are conducive to the synthesis of effective mesoscopic control policies for transportation systems. Indeed, this is one of the features that has made network flow models one of the most popular tools for mesoscopic control \cite{MahutFlorianEtAl2008, Levin2017}. Third, on the operational side (i.e., at a microscopic control level), stochastic effects in real-time operations can be addressed by leveraging receding-horizon optimization. Indeed, in Appendix \ref{sec:realtime}, we present a receding-horizon implementation of Problem \eqref{eq:P-AMoD}, which incorporates new information on customer demand as it is revealed, and quickly returns solutions amenable to real-time control of P-AMoD systems. Agent-based simulations show that the resulting control policy is \emph{highly robust to stochastic fluctuations in demand for transportation and power}.
}
\revTCNS{\paragraph{Social Welfare}
In order to directly compute and implement a solution to the P-AMoD problem \eqref{eq:P-AMoD}, 
 the TSO and the ISO would have to share the goal of maximizing social welfare and be willing to collaborate on a joint policy.} This assumption is, in general, not realistic: not only do the TSO and ISO have different goals, but they are also generally reluctant to share the information required for successful coordination. However, \revTCNS{in Section \ref{sec:equilibrium},
 we show that the social optimum is a general equilibrium for a self-interested TSO, self-interested power generators, and a non-profit ISO acting as a market broker and using LMP to determine electricity prices. We also propose a  distributed \emph{privacy-preserving} mechanism that an ISO and a TSO can adopt to compute the equilibrium LMPs. Together, these results show that the social optimum can be achieved in the presence of self-interested TSOs and generator operators that wish to minimize their own private cost functions and are unwilling to share private information.}
\paragraph{Ride-sharing}
In this paper we consider single-occupancy vehicles, in line with the mode of operation of current MoD systems. \revTCNS{The extension of the P-AMoD framework to ride-sharing, where multiple passengers share the same vehicle for a portion of their ride, is an interesting avenue for future research.}

\paragraph{DC Model}
The DC model for the power network has some shortcomings, chiefly the inability to handle voltage constraints \cite{Hogan1996} and system-dependent accuracy \cite{StottJardimEtAl2009}. On the other hand, its linearity makes it amenable to large-scale optimization and easy to integrate within the economic theory upon which the transmission-oriented market design is based on \cite{StottJardimEtAl2009}. Moreover, the DC model is widely adopted among ISOs \cite{ONeillDautelEtAl2011}, and its LMP calculations are fairly accurate \cite{OverbyeChengEtAl2004}. 
Hence, the DC model is appropriate for high-level synthesis of joint control policies such as those considered in this paper.

\revTCNS{
\section{A General Economic Equilibrium}
\label{sec:equilibrium}
}
\revTCNS{The social welfare formulation presented in the previous section assumes that the TSO and the ISO both wish to maximize social welfare for given generation costs; also, in order to compute the socially optimal solution to the P-AMoD problem, the TSO and the ISO must be willing to share their private information (e.g., customer transportation requests and power generation costs). In this section, we provide \revTCNS{game-theoretical results} and algorithmic tools to overcome these rather unrealistic assumptions}.

We define a P-AMoD \emph{market} as a perfectly competitive market where self-interested power generators sell power to the power network, a self-interested TSO buys from and sells power to the power network and services transportation requests, and a non-profit ISO acts as a market broker (similar to the model in \cite{WangNegrete-PinceticEtAl2012}).
In this framework, we show that the socially optimal solution to the P-AMoD problem \revTCNS{is a general equilibrium for the TSO and the generators \revTCNSII{(that is, supply and demand of electricity are balanced, and no participant to the market has an incentive to change its policy) \cite{KirschenStrbac2004}} if the ISO sets the price of electricity through Locational Marginal Pricing. Next, we propose a distributed privacy-preserving algorithm that the TSO and the ISO can use to achieve the equilibrium (specifically, compute the equilibrium LMPs) without sharing any information on transportation demand or generation costs.}

\revTCNS{\subsection{The socially optimal solution is a general equilibrium}}

{\begin{theorem}[\revTCNS{The socially optimal solution of the P-AMoD problem is a general equilibrium when Locational Marginal Pricing is used}]
\label{thm:equilibrium}
Consider an optimal solution $\{f_0^\star, f_{B,d_B}^\star, \lambda^{c, \text{in}\star}_m, \lambda^{t,c, \text{out}\star}_m, N_F^\star,\theta^\star,p^\star \}$ to the P-AMoD Problem \eqref{eq:P-AMoD}. Also consider a perfectly competitive market (denoted as the P-AMoD market) where a self-interested TSO solves the Vehicle Routing and Charging problem \eqref{eq:VRCP} by selecting variables $\{f_0, f_{B,d_B}, \lambda^{c, \text{in}}_m, \lambda^{t,c, \text{out}}_m, N_F\}$,
self-interested power generators sell power to the network by determining the revenue-maximizing power generation schedule $\{p\}$, 
and a non-profit ISO acts as a market broker by setting locational marginal prices.
Then  $(\{f_0^\star, f_{B,d_B}^\star, \lambda^{c, \text{in}\star}_m, \lambda^{t,c, \text{out}\star}_m, N_F^\star\},\{p^\star \})$
is a general equilibrium.
\end{theorem}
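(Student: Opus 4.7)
My strategy is classical Lagrangian decomposition combined with the first welfare theorem: since \eqref{eq:P-AMoD} is a linear program with linear constraints (so strong duality holds), the dual multipliers of the coupling constraint \eqref{eq:buscoupling} will play the role of competitive electricity prices at the load buses. Let $\mu_{l,t}$ denote the Lagrange multiplier of \eqref{eq:buscoupling}, and let $\alpha_{l,t}$ and $\beta_{l,t}$ denote the multipliers of the power balance constraint \eqref{eq:powerbal} and the distribution-transformer constraint \eqref{eq:powerdistlim}, respectively. The four-step plan is: (i) form the Lagrangian of \eqref{eq:P-AMoD} by dualizing only \eqref{eq:buscoupling}; (ii) identify $\mu_{l,t}^\star$ with the LMP; (iii) show that the TSO's subproblem at prices $\mu^\star$ coincides with the VRCP \eqref{eq:VRCP}; and (iv) show that each generator's revenue-maximizing response at the LMPs reproduces $p^\star$, with the ISO's role reducing to market clearing.

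For step (ii), I note that the decision variable $d_l(t)$ appears only in \eqref{eq:powerbal}, \eqref{eq:powerdistlim}, and \eqref{eq:buscoupling}. Stationarity of the Lagrangian with respect to $d_l(t)$ therefore yields $\mu_{l,t}^\star = \alpha_{l,t}^\star + \beta_{l,t}^\star$, which by the paper's definition equals the LMP at bus $l$ at time $t$. For step (iii), after fixing a sign convention in which the charging load enters the dualized equation with a positive coefficient, the terms of the Lagrangian that depend on the TSO variables $(f_0, f_{B,d_B}, \lambda^{c,\text{in}}_m, \lambda^{t,c,\text{out}}_m, N_F)$ aggregate into the objective $V_D D_V + V_B + V_T T_M + \sum_{l,t}\mu_{l,t}^\star \cdot [\text{net charging load at }(l,t)]$, subject to \eqref{eq:bundledcustbal}, \eqref{eq:rebbal}, \eqref{eq:congestion}, and \eqref{eq:chargercap}. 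Substituting the explicit form of the charging load from \eqref{eq:buscoupling}, this subproblem is exactly the VRCP \eqref{eq:VRCP} with $p_s^+(t) = p_s^-(t) = \mu_{l,t}^\star$, where $l$ is the load bus of station $s$; strong duality and complementary slackness then imply that $(f_0^\star, f_{B,d_B}^\star, \ldots)$ is an optimal TSO response. For step (iv), the residual Lagrangian terms separate generator by generator into $\max_{p(g,\cdot)} \sum_t (\alpha_{g,t}^\star - o_g(t)) p(g,t)$ subject to \eqref{eq:powergenlim}--\eqref{eq:powergenramp} (with the transmission constraints \eqref{eq:powerthermal} enforced by the ISO through $\theta^\star$), so that $p^\star$ is an optimal response of each generator to the nodal prices communicated by the ISO. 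Finally, primal feasibility of \eqref{eq:buscoupling} at the social optimum is exactly the market-clearing condition: aggregate supply equals aggregate demand at each bus.

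The main obstacle is step (ii): one must treat $d_l(t)$ as a genuine decision variable of \eqref{eq:P-AMoD} (rather than as a parameter) and trace the KKT system carefully in order to identify $\mu_{l,t}^\star$ with the \emph{sum} of two shadow prices, matching the paper's LMP definition rather than just one of them. This identification is the conceptual heart of the argument; it ties the Lagrangian decomposition to the ISO's actual pricing rule and ultimately explains why LMP (as opposed to, e.g., a uniform marginal-cost price) is the scheme that supports a general equilibrium. The remaining verifications are routine LP bookkeeping; a minor subtlety is that LPs may admit multiple optima, so the theorem should be read as asserting that the given socially optimal primal-dual pair supports an equilibrium, which is exactly what the Lagrangian construction delivers.
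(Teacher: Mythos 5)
Your proposal is correct and follows essentially the same route as the paper's proof: a KKT/Lagrangian-duality argument that identifies the shadow prices of the power-balance constraint \eqref{eq:powerbal} and the transformer constraint \eqref{eq:powerdistlim} with the LMPs, shows that the TSO's partial-Lagrangian subproblem at those prices is exactly the VRCP \eqref{eq:VRCP} (because the $V_E$ term reproduces the dual terms in the social problem's stationarity conditions), and handles the generators' profit maximization as a separate, decoupled step. The only difference is presentational: you dualize \eqref{eq:buscoupling} explicitly and recover the LMP as $\mu^\star_{l,t}=\alpha^\star_{l,t}+\beta^\star_{l,t}$ from stationarity in $d_l(t)$, whereas the paper substitutes \eqref{eq:buscoupling} into the power constraints and obtains the same identification via the chain rule.
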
}
\emph{Proof Sketch:} The proof relies on showing that satisfaction of the KKT conditions for Problem \eqref{eq:P-AMoD} implies satisfaction of the KKT conditions for Problem \eqref{eq:VRCP}.
The key insight is that the term $V_E$ in the cost function of Problem \eqref{eq:VRCP} captures the marginal cost imposed by the TSO on the power network, aligning the TSO's incentives with the social optimum. A rigorous proof is reported in the Appendix.

\subsection{A distributed algorithm for the P-AMoD problem}
\label{sec:dual-decomposition}
Next, we show that the TSO and the ISO  can compute the locational marginal prices that enforce the general equilibrium without disclosing their private information.
\revTCNS{
The structural coupling between the transportation and power networks is only mediated by the electricity prices. Exploiting this insight, we use a standard dual decomposition algorithm \cite[Ch. 6.4]{Bertsekas1999} to solve Problem \eqref{eq:P-AMoD} in a distributed manner, similar to \cite{AlizadehWaiEtAl2016}.   
Concretely, the TSO repeatedly solves Problem \eqref{eq:VRCP} with electricity prices proposed by the ISO, and the ISO updates the electricity prices according to the TSO's proposed charging schedule; the procedure is repeated until convergence.
We show that this simple algorithm enjoys two remarkable properties.
First, the TSO and the ISO only exchange publicly-available information (namely, the proposed charging schedule of the AMoD vehicles and the proposed electricity prices); thus, the algorithm is \emph{privacy-preserving}.
Second, at each step, the TSO simply solves Problem \eqref{eq:VRCP}. Thus, a welfare-minded ISO can employ the algorithm to steer a \emph{selfish} TSO towards the social optimum.
It is natural to ask why the ISO would be interested in steering the system towards social welfare. ISOs are non-profits whose charter goal is to match power supply with demand while ensuring grid reliability \cite{ONeillDautelEtAl2011}. As shown in the Appendix and in \cite{AlizadehWaiEtAl2016}, lack of cooperation between the TSO and the ISO can lead to blackouts and to large oscillations in demand: accordingly, steering the TSO towards an equilibrium is well in line with the ISO's goal of ensuring grid reliability. 
}

For ease of notation, we define $f_\diamond = \{f_0 \cup f_{B,d_B}\}$ and we rewrite Equations \revTCNSII{\eqref{eq:bundledcustbal}}-\eqref{eq:rebbal} and \eqref{eq:congestion}-\eqref{eq:chargercap} as, respectively,

{%
\jvspace{-2mm}
\begin{align*}
f_{\text{TSO}}^{\text{eq}}(f_\diamond, \lambda^{c, \text{in}}_m, \lambda^{t,c, \text{out}}_m, N_F)&=0 \text{, (Eq. \revTCNSII{\eqref{eq:bundledcustbal}}-\eqref{eq:rebbal})}, \text{with dual } \lambda^{\text{eq}}_\text{TSO},\\
f_{\text{TSO}}^{\text{ineq}}(f_\diamond, \lambda^{c, \text{in}}_m, \lambda^{t,c, \text{out}}_m, N_F)&\leq 0 \text{, (Eq. \eqref{eq:congestion}-\eqref{eq:chargercap})}, \text{with dual }\mu^{\text{ineq}}_\text{TSO}.
\end{align*}
}
We also rewrite Equations \eqref{eq:powerbal} and \eqref{eq:powerthermal}-\eqref{eq:powerdistlim} as, respectively,

{%
\jvspace{-2mm}
\begin{align*}
f_{\text{ISO}}^{\text{eq}}(f_\diamond,\theta,p)&= 0 \text{, (Eq. \eqref{eq:powerbal})}, &\text{with dual } &\lambda^{\text{eq}}_\text{ISO},\\
\quad f_{\text{ISO}}^{\text{ineq}}(f_\diamond,\theta,p)&\leq 0 \text{, (Eq. \eqref{eq:powerthermal}-\eqref{eq:powerdistlim})}, &\text{with dual } &\mu^{\text{ineq}}_\text{ISO}.
\end{align*}
\jvspace{-1mm}
}

\revTCNSII{The vector $ (\lambda^{\text{eq}}_\text{ISO} + \mu^{\text{ineq}}_\text{ISO})$ denotes the locational marginal price of energy at each bus in the power network and at each corresponding charging node. %
That is,
{%
\begin{equation}
\label{eq:LMP}
p_{(\mathbf{v},\mathbf{w})} = J_C\left( \lambda^{\text{eq}}_\text{ISO}(l_{v_{\mathbf{v}}},t_{\mathbf{v}}) +  \mu^{\text{ineq}}_\text{ISO}(l_{v_{\mathbf{v}}},t_{\mathbf{v}}) \right),
\end{equation}
}
where   $l_{v_{\mathbf{v}}}: v_\mathbf{v}=\mathcal{M}_{\text{P,R}}(l_{v_{\mathbf{v}}})$
and, with a slight abuse of notation, we denote the dual variable corresponding to constraint \eqref{eq:powerdistlim} at edge $(l_{v_{\mathbf{v}}},t_{\mathbf{v}})$ as $\mu^{\text{ineq}}_\text{ISO}(l_{v_{\mathbf{v}}},t_{\mathbf{v}})$.
\ifshortest\else~Note that $p_{(\mathbf{v},\mathbf{w})}$ is the price per discrete energy level, whereas $(\lambda^{\text{eq}}_\text{ISO}(l,t) + \mu^{\text{ineq}}_\text{ISO}(l,t))$ is the price per unit of energy.\fi
}

We consider a partial Lagrangian relaxation of Problem \eqref{eq:P-AMoD}\ifshortest:~\else, that is,\fi

{%
\begin{subequations}
\jvspace{-2mm}
\begin{align}
&\!\!\! \underset{{\substack{f_\diamond, \lambda^{c, \text{in}}_m, \lambda^{t,c, \text{out}}_m, \\ N_F,\theta,p }}}{\text{minimize}} \!\!\!\!\!\!\!\!\!\!\!\!
	& &  V_T T_{M}(f_\diamond) + V_D D_v(f_\diamond)+ V_B(f_\diamond) + C_G(p) \nonumber \\[-1.4em]
&&&  +  \lambda^{\text{eq}}_\text{ISO} f^{\text{eq}}_\text{ISO}(f_\diamond,\theta,p)\!+  \mu^{\text{ineq}}_\text{ISO} f^{\text{ineq}}_\text{ISO}(f_\diamond,\theta,p), \\
	& \text{subject to} && f^{\text{eq}}_\text{TSO}(f_\diamond, \lambda^{c, \text{in}}_m, \lambda^{t,c, \text{out}}_m, N_F)=0,\\
	&&& f^{\text{ineq}}_\text{TSO}(f_\diamond)\leq 0.
\end{align}
\label{eq:P-AMoD_lagrangian}
\jvspace{-2mm}
\end{subequations}
}

The TSO and the ISO iteratively optimize Problem \eqref{eq:P-AMoD_lagrangian} with respect to their own decision variables for a fixed value of the Lagrangian multipliers $\lambda^{\text{eq}}_\text{ISO}$ and $\mu^{\text{ineq}}_\text{ISO}$. Specifically, at step $k$ of the iterative procedure, the TSO solves:

{%
\begin{subequations}
\begin{align}
& \underset{\substack{f_\diamond^k, \lambda^{c, \text{in},k}_m, \lambda^{t,c, \text{out},k}_m, N_F^k}}{\text{minimize}} \!\!\!\!\!\!\!\!\!
	& &  V_T T_{M}(f_\diamond^k) + V_D D_v(f_\diamond^k)+V_B(f_\diamond^k) \label{eq:TSOcost_lagrangian} \\
	&&& + \lambda^{\text{eq},k-1}_\text{ISO} f^{\text{eq}}_\text{ISO}(f_\diamond^k) + \mu^{\text{ineq},k-1}_\text{ISO} f^{\text{ineq}}_\text{ISO}(f_\diamond^k), \nonumber\\
	& \text{subject to} && f^{\text{eq}}_\text{TSO}(f_\diamond^k, \lambda^{c, \text{in},k}_m, \lambda^{t,c, \text{out},k}_m, N_F^k)=0,\\
	&&& f^{\text{ineq}}_\text{TSO}(f_\diamond^k)\leq 0.
\end{align}
\label{eq:VRCP_lagrangian}
\end{subequations}
}
Minimizing the last two terms of Equation \eqref{eq:TSOcost_lagrangian} is equivalent to minimizing the cost of electricity $V_E$ with prices $ \left(\lambda^{\text{eq},k-1}_\text{ISO}+  \mu^{\text{ineq},k-1}_\text{ISO}\right)$. That is,
\[
\argmin_{f_\diamond^k} \lambda^{\text{eq},k-1}_\text{ISO} f^{\text{eq}}_\text{ISO}(f_\diamond^k) + \mu^{\text{ineq},k-1}_\text{ISO} f^{\text{ineq}}_\text{ISO}(f_\diamond^k) = \argmin_{f_\diamond^k} V_E.
\] Thus, Problem \eqref{eq:VRCP_lagrangian} is equivalent to the Vehicle Routing and Charging Problem \eqref{eq:VRCP}.

Analogously, at step $k$, the ISO solves

{\small
\begin{align*}
& \underset{\theta^k,p^k }{\text{minimize}} 
	& & C_G(p^k) +  \lambda^{\text{eq},k-1}_\text{ISO} f^{\text{eq}}_\text{ISO}(\theta^k,p^k) +  \mu^{\text{ineq},k-1}_\text{ISO} f^{\text{ineq}}_\text{ISO}(\theta^k,p^k).
\end{align*}
}
The Lagrangian multipliers are then updated by the ISO as

{%
\begin{align*}
\lambda^{\text{eq},k}_\text{ISO}&=\lambda^{\text{eq},k-1}_\text{ISO}+\alpha_k\left( f^{\text{eq}}_\text{ISO}(f_\diamond^k,\theta^k,p^k) \right),\\
\mu^{\text{ineq},k}_\text{ISO}&=
\max\left(0,\mu^{\text{ineq},k-1}_\text{ISO}+\alpha_k\left( f^{\text{ineq}}_\text{ISO}(f_\diamond^k,\theta^k,p^k) \right)\right),
\end{align*}
}
for an appropriately chosen step size $\alpha_k$%
, and the TSO is informed of the new  proposed price of electricity (i.e., the new value of the sum of the Lagrange multipliers).

Note that the ISO only needs to know the TSO's proposed charging schedule to compute $f^{\text{eq}}_\text{ISO}(f_\diamond^k,\theta^k,p^k)$ and $f^{\text{ineq}}_\text{ISO}(f_\diamond^k,\theta^k,p^k)$; in particular, the TSO does not need to disclose the customers' demand or the planned vehicle routes. Conversely, the ISO only needs to inform the TSO of the proposed price of electricity: the generation costs and the power demands remain private. %

\revTCNS{Convergence of the dual decomposition algorithm for a  ``small enough'' step size $\alpha_k$  follows \ifshortest from \cite[Prop. 6.3.1]{Bertsekas1999}. \else~immediately from \cite[Proposition 6.3.1]{Bertsekas1999}.\fi}

\jvspace{-2mm}
\section{Numerical Experiments}
\label{sec:num}
We study a hypothetical deployment of a P-AMoD system to satisfy medium-distance commuting needs in the Dallas-Fort Worth metroplex, with the primary objective of investigating the interaction between such a system and the Texas power network. Specifically, we study a ten-hour interval corresponding to one commuting cycle, from 5 a.m. to 3 p.m., with 30-minute resolution.
\ifextendedv
Data on commuting patterns is collected from the Census Transportation Planning Products (CTPP) 2006-2010 Census Tract Flows, based on the American Communities Survey (ACS) \cite{FHA2014}. \revIJRR{The AMoD system is assumed to service 30\% of all commuting trips, a scenario capturing low to medium penetration of AMoD}. Departure times are gathered from ACS data \cite{UniStatesCensusBureau2017}.
\else
Data on commuting patterns is collected from the 2006-2010 Census Tract Flows\ifshortest.\else{, based on the American Communities Survey.}\fi \revIJRR{~The AMoD system is assumed to service 30\% of all commuting trips, a scenario capturing low to medium penetration of AMoD}. %
\fi
~Census tracts in the metroplex are aggregated in 25 districts, as shown in Figure \ref{fig:ClustersAndRoads}. We only consider trips starting and ending in different districts: the total number of customer requests is 400,532.
The commuters' value of time is set equal to \$24.40/hr, in \ifshortest ~line~\else~accordance~\fi~ with DOT guidelines. %
The road network, \ifshortest~\else~the~\fi road capacities, and \ifshortest~\else~the~\fi travel times are obtained from OpenStreetMap data 
\ifextendedv \cite{HaklayWeber2008,Boeing2017} 
\fi
and simplified. The resulting road network, containing 25 nodes and 147 road links, is shown in Figure \ref{fig:ClustersAndRoads}.
\begin{figure}[!htb]
\jvspace{-2mm}
\centering
\ifsinglecolumn
\includegraphics[width=.8\textwidth]{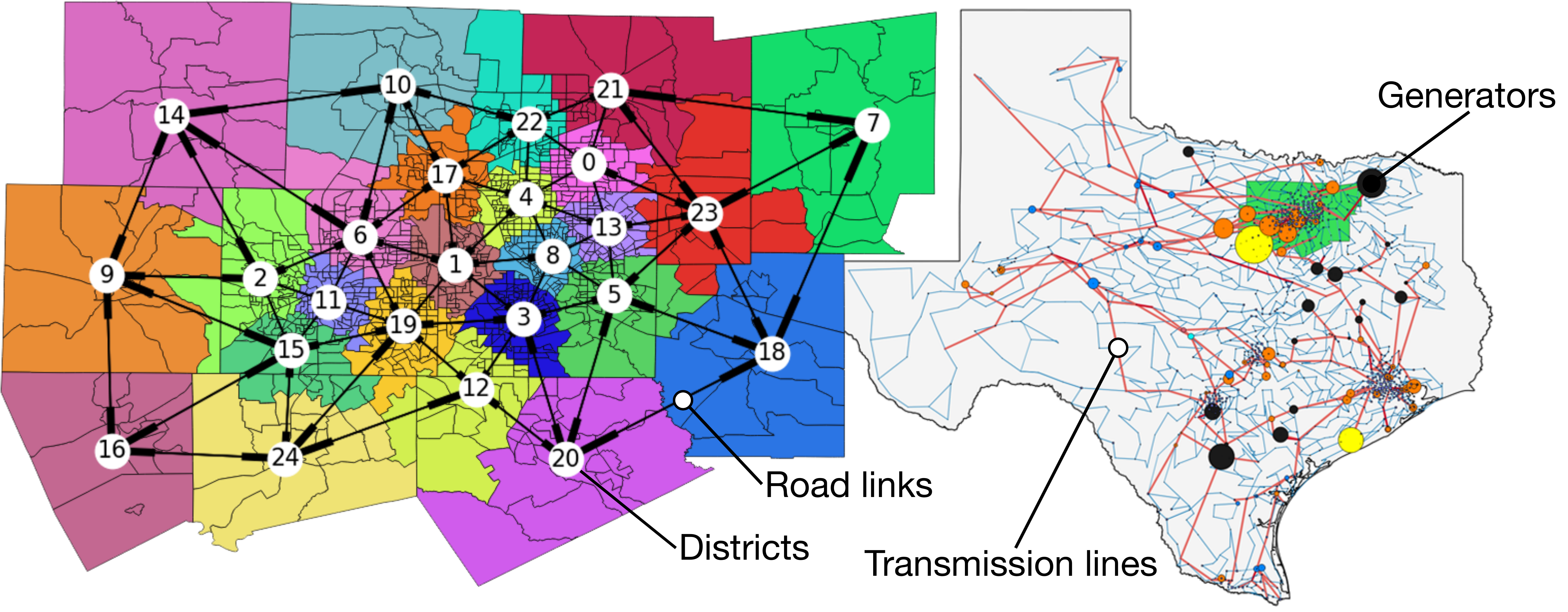}
\else
\includegraphics[width=.5\textwidth]{ClusteredTracts_with_graph_v6}
\fi
\jvspace{-3mm}
\caption{Left: Census tracts and simplified road network for Dallas-Fort Worth. Right: Texas power network model (from \cite{ICSEG2016}). 
\ifextendedv The capacity of each edge equals the overall capacity of roads connecting the start and end district. The travel time between two nodes is the minimal travel time between the centroids of the corresponding districts. \fi}
\label{fig:ClustersAndRoads}
\jvspace{-2mm}
\end{figure}

The battery capacity and power consumption of the EVs are modeled after the 2017 Chevrolet Bolt. %
The cost of operation of the vehicles, excluding electricity costs, is \$0.16/mile (6.55\textcent/mile for maintenance and 9.46\textcent/mile for mileage-based depreciation), in accordance with AAA guidelines. %
The fleet consists of 150,000 vehicles, i.e. 1 AMoD vehicle for every 2.67 customers, similar to the 2.6 ratio in \cite{SpieserTreleavenEtAl2014}.
To represent the possibility that vehicles might not begin the day fully charged, each EV starts the day with a 50\% battery charge and is required to have the same level of charge at the end of the simulation.

We adopt a synthetic model of the Texas power network provided in \cite{ICSEG2016} and portrayed in Figure \ref{fig:ClustersAndRoads}. The model provided does not contain power generation costs: we labeled each generator according to its source of power and assigned generation costs according to U.S. Energy Information Administration estimates\ifextendedv ~\cite{EIA2017}\fi.
The model is also time-invariant; to model the time evolution of power demand and the availability of solar and wind power we used historical data from ERCOT, Texas's ISO\ifextendedv { \cite{ERCOT2017}}\fi, and we imposed ramp-up and ramp-down constraints of 10\%/hr and 40\%/hr on the generation capability of nuclear and coal power plants, respectively.

{
\begin{table*}[h!]
\centering
\caption{Simulation results (one commuting cycle, 10 hours).}
	\begin{tabular}{r|r|rr|rr|rr}
	 \multicolumn{1}{c}{} & \multicolumn{1}{c}{} &  \multicolumn{2}{c}{\$15,734 battery} & \multicolumn{2}{c}{\$1,573 battery} &  \multicolumn{2}{c}{No depreciation} \\
	 \multicolumn{1}{c}{}& Baseline &  P-AMoD & Uncoord. &  P-AMoD & Uncoord. &  P-AMoD & Uncoord. \\ 
	\hline
	 Avg. customer travel time [h] & - &  1.0277 &  1.0277 &  1.0277 &  1.0277 &  1.0277 &  1.0277\\
	Total energy demand [GWh] &517.498 & 520.543 & 520.543& 520.543 & 520.544 & 520.590 & 520.966\\
	Total electricity expenditure [k\$] & 39,617.36 & 39,847.18 & 39,865.34  & 39,847.22 & 40,552.90 & 39,488.93 & 39,519.98\\
\quad w.r.t. baseline [k\$]&  & +229.82 & +247.98 &  +229.83 & +935.54 & -128.43 & -97.38\\ 
	Avg. price in DFW [\$/MW] & 78.75 & 78.68 & 78.79  & 78.69 & 82.23  & 76.89 & 77.12 \\
	TSO  electricity expenditure [k\$] & -& 228.86 & 237.04 & 228.90 & 258.36 & 228.55 & 408.18 \\
	\end{tabular}
\label{tab:results}
\jvspace{-2em}
\end{table*}
}

We  compare the results of three simulation studies. In the baseline simulation study, no electric vehicles are present: we consider the power network \emph{in isolation} subject only to exogenous loads. In the P-AMoD simulation study, we solve Problem \eqref{eq:P-AMoD}, which embodies the cooperation between the TSO and the ISO \revTCNS{and corresponds to the equilibrium in Theorem \ref{thm:equilibrium}}. Finally, in the uncoordinated simulation study, we first solve the TSO's Vehicle Routing and Charging problem with \emph{fixed} electricity prices obtained from the baseline simulation study; we then compute the load on the power network resulting from the vehicles' charging and discharging, and solve the ISO's Economic Dispatch problem with the updated loads.
The uncoordinated simulation study captures the scenario where the TSO attempts to minimize its passengers' cost while disregarding the coupling with the power network.

\revIJRR{For each study, we consider three different levels of battery depreciation. In the first case, the battery replacement cost is \$15,734 (corresponding to the list price of a Chevrolet Bolt battery)
 and vehicles' batteries are fully depreciated over  1,000 charge-discharge cycles, in line with the performance of current battery technology. In the second case,  the battery replacement cost is set to one tenth of the current one (or, equivalently, the vehicles' battery life is 10,000 cycles). In the third case, battery depreciation is neglected.}

Table \ref{tab:results} and Figure \ref{fig:LMPs_TX} show the results.
The quality of service experienced by TSO customers, measured by the average travel time, is virtually identical in the P-AMoD and in the uncoordinated case. The energy demand of the AMoD system is also very similar in both cases. On the other hand, the effect of coordination on the overall electricity expenditure is noticeable. Specifically, with current battery technology, coordination causes a 7.3\% reduction in the TSO's electricity expenditure compared to the uncoordinated case, corresponding to savings of \$9M per year (assuming two commuting cycles per day and 250 work days per year). As battery prices are reduced ten-fold, the urgency of coordination between AMoD systems and the power network increases. In absence of coordination, the TSO's attempts to greedily charge and return power to the grid backfire, resulting in a four-fold increase in the TSO's electricity bill, a  4.4\% increase in the unit price of electricity in the Dallas-Fort Worth area, and an additional expenditure of \$935k per day, or \$467M  per year, in electricity costs borne by all power network customers. 
 Conversely, coordination between the TSO and the ISO ensures that the unit price of electricity in the Dallas-Fort Worth area remains the \emph{same} as in the baseline case, and results in savings of \$14.7M/year for the TSO compared to the uncoordinated case.  A further reduction in the replacement cost of the batteries allows coordination between the AMoD system and the power network to reduce the \emph{total} expenditure for  electricity by \$128k per commuting cycle (\$64M/year) compared to the baseline case, despite the increased demand.
 In other words, a P-AMoD system allows a TSO to deliver on-demand transportation without an increase in overall electricity expenditure -- a remarkable, and perhaps surprising, finding.
 In the uncoordinated case, the presence of the TSO also reduces the overall electricity expenditure by \$97k/cycle compared to the baseline case - however, the reduction is offset by a \$180k/cycle increase in the TSO's own electricity bill  compared to the coordinated case.

 \revTCNS{Collectively, these results show that, even with current battery technology, the savings that can be achieved through coordination between AMoD systems and the power network are highly significant; future battery technology could unlock additional savings of hundreds of millions of dollars and reduce the overall electricity expenditure by tens of millions of dollars per year.}

Who benefits from the reduction in energy expenditure? From the last two rows in Table \ref{tab:results}, one can see that, in the case where no depreciation is considered, the average price of electricity in the P-AMoD case is 2.37\% lower than in the uncoordinated case in Dallas-Fort Worth (corresponding to savings %
 of \$ 147M/year for Dallas-Fort Worth power network customers, excluding the TSO). The energy expenditure of the TSO in the P-AMoD case is 44\% lower than in the uncoordinated case (a saving of \$180k per commuting cycle, corresponding to close to \$90M/year). Finally, electricity customers outside of Dallas experience a small reduction of 0.23\% in their energy expenditure.
 Thus, the majority of the benefits of coordination are reaped by customers of the power network in the region where the AMoD system is deployed; the TSO also benefits from a noticeable reduction in its electricity expenditure.
\begin{figure}[!htb]
\centering
\jvspace{-1em}
\ifsinglecolumn
\includegraphics[width=.8\textwidth]{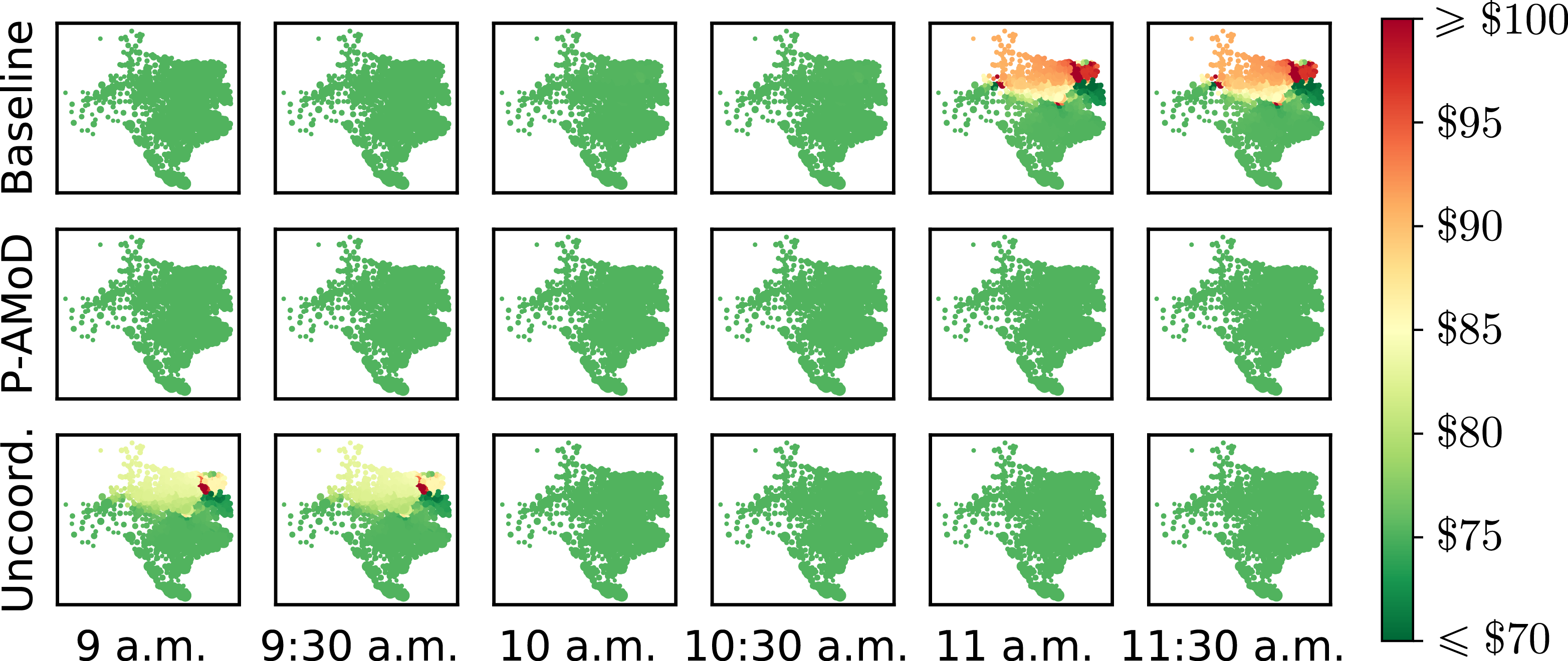} 
\else
\includegraphics[width=.49\textwidth]{LMPs_TX5_compact.png} 
\fi
\jvspace{-3mm}
\caption{LMPs in Texas between 9 a.m. and 11:30 a.m.
The presence of the AMoD fleet can reduce locational marginal prices; coordination between the TSO and the ISO can yield a further reduction. A battery replacement cost of \$1,573 is considered.
}
\label{fig:LMPs_TX}
\jvspace{-2mm}
\end{figure}
Figure \ref{fig:LMPs_TX} shows this phenomenon in detail \revIJRR{for the scenario where the battery replacement cost is \$1,573}. The presence of the AMoD system results in a decrease in the LMPs with respect to the baseline case (11-11:30 a.m.). As electricity prices increase, empty vehicles travel to carefully chosen stations to sell their stored energy back to the network: this results in reduced congestion and lower prices in the power network, even in the absence of coordination. Crucially, coordination between the TSO and the ISO can result in further decreases in the price of electricity with respect to the uncoordinated case (9-9:30 a.m.), significantly curtailing the impact of the AMoD system on the power network.
By leveraging their battery capacities and acting as mobile storage units, the EVs are able to reduce congestion in the power transmission network: this results in lower LMPs in the Dallas-Fort Worth region, and hence lower electricity expenditure.

Simulations were carried out on commodity hardware (Intel Core i7-5960, 64 GB RAM) and used the MOSEK LP solver.
The source code is available online\footnote{\url{https://dx.doi.org/10.5281/zenodo.3241651}}\ifextendedv ~under an open-source license.~\else.~\fi %
The simulations required 3,923s for the P-AMoD scenario, 2,885s for the uncoordinated scenario, and 4.55s for the baseline scenario.
While such computation times could be improved by using high-performance computational hardware, in Appendix \ref{sec:realtime} we present a receding-horizon algorithm for P-AMoD which, in addition to the intrinsic robustness benefits of closed-loop control, can be solved in  minutes on commodity hardware
 and returns integral solutions that are directly amenable to control of P-AMoD systems. \revTCNS{The algorithm allows us to perform agent-based simulations that provide further insights into the value of P-AMoD and showcase the robustness of the proposed approach to stochastic fluctuations in customer demand.
 }

\jvspace{-3.5mm}
\section{Conclusions and Future Work} \label{sec:conclusions}
\jvspace{-.4mm}

In this paper we studied the interaction between an AMoD system and the electric power network. \revTCNS{The network flow model} we proposed subsumes earlier models for AMoD systems and for the power network; critically, it captures the coupling between the two systems and allows for their \emph{joint optimization}. 
\revTCNS{We showed that the jointly optimal solution to the P-AMoD problem is a general economic equilibrium, and we proposed a distributed privacy-preserving algorithm that allows agents to find the equilibrium without sharing private information about customer requests, generation costs, or power demands: thus, the results in this paper are applicable to the realistic case where the TSO and generator operators are self-interested.} We applied our model and algorithms to a case study of an AMoD deployment in Dallas-Fort Worth, TX. The case study showed that, depending on the maturity and cost of battery technology, coordination between the TSO and the ISO can result in a \emph{reduction} in the overall electricity expenditure (despite the increase in demand), while having a negligible impact on the TSO's quality of service; conversely, lack of coordination can result in large increases in power prices for power network customer and TSOs alike.
\revIJRR{These results are corroborated by agent-based simulations
\ifextendedv presented \else\fi
in the Appendix.}
 
This work opens multiple avenues of research. %
First, we plan to  capture the impact of cooperation between the TSO and the ISO on the power \emph{distribution} network by incorporating   convex optimal power flow models.
\revTCNS{Second, we plan to develop a \emph{stochastic} (queueing-theoretical) model of P-AMoD, which explicitly captures the stochastic nature of demand for transportation and power, and enables the design of controllers that directly mitigate large-scale stochastic fluctuations.}
\revTCNS{Third, we will extend our model to capture the scenario where multiple TSOs compete for customers while sharing the same transportation and power infrastructure, extending our previous results in \cite{AlizadehWaiEtAl2017}.}
\revIJRR{Fourth, we will extend the P-AMoD model to capture other modes of provision of service, including heterogeneous fleets where vehicles may differ in size, seating capacity, and battery capacity, and ride-sharing mechanisms where multiple customers with similar origins and destinations can travel in the same vehicle.} 
Fifth, the model of the power network considered in this paper does not capture ancillary services such as regulation and spinning reserves. We will extend our model to capture those and evaluate the feasibility of using coordinated fleets of EVs to aid in short-term control of the power network.
Finally, we wish to explore the effect of TSO-ISO coordination on penetration of renewable energy sources, and to determine whether large-scale deployment of AMoD systems can increase the fraction of renewable power sources in the generation power mix.

\jvspace{-3mm}
\bibliographystyle{IEEEtran} 
\bibliography{../../../bib/main,../../../bib/ASL_papers}

\newcommand{\noopsort}[1]{} \newcommand{\printfirst}[2]{#1}
  \newcommand{\singleletter}[1]{#1} \newcommand{\switchargs}[2]{#2#1}
\begin{thebibliography}{10}
\providecommand{\url}[1]{#1}
\csname url@samestyle\endcsname
\providecommand{\newblock}{\relax}
\providecommand{\bibinfo}[2]{#2}
\providecommand{\BIBentrySTDinterwordspacing}{\spaceskip=0pt\relax}
\providecommand{\BIBentryALTinterwordstretchfactor}{4}
\providecommand{\BIBentryALTinterwordspacing}{\spaceskip=\fontdimen2\font plus
\BIBentryALTinterwordstretchfactor\fontdimen3\font minus
  \fontdimen4\font\relax}
\providecommand{\BIBforeignlanguage}[2]{{%
\expandafter\ifx\csname l@#1\endcsname\relax
\typeout{** WARNING: IEEEtran.bst: No hyphenation pattern has been}%
\typeout{** loaded for the language `#1'. Using the pattern for}%
\typeout{** the default language instead.}%
\else
\language=\csname l@#1\endcsname
\fi
#2}}
\providecommand{\BIBdecl}{\relax}
\BIBdecl

\bibitem{SpieserTreleavenEtAl2014}
K.~Spieser, K.~Treleaven, R.~Zhang, E.~Frazzoli, D.~Morton, and M.~Pavone,
  ``Toward a systematic approach to the design and evaluation of {Autonomous}
  {Mobility-on-Demand} systems: A case study in {Singapore},'' in \emph{Road
  Vehicle Automation}.\hskip 1em plus 0.5em minus 0.4em\relax {Springer}, 2014.

\bibitem{Sioshansi2012}
R.~Sioshansi, ``{OR} {Forum}{\textemdash}modeling the impacts of electricity
  tariffs on plug-in hybrid electric vehicle charging, costs, and emissions,''
  \emph{{Operations Research}}, vol.~60, no.~3, pp. 506--516, 2012.

\bibitem{AlizadehWaiEtAl2016}
M.~Alizadeh, H.-T. Wai, M.~Chowdhury, A.~Goldsmith, A.~Scaglione, and
  T.~Javidi, ``Optimal pricing to manage electric vehicles in coupled power and
  transportation networks,'' \emph{{IEEE Transactions on Control of Network
  Systems}}, vol.~4, no.~4, pp. 863--875, 2017.

\bibitem{HadleyTsvetkova2009}
S.~W. Hadley and A.~A. Tsvetkova, ``Potential impacts of plug-in hybrid
  electric vehicles on regional power generation,'' \emph{{The Electricity
  Journal}}, vol.~22, no.~10, pp. 56--68, 2009.

\bibitem{Denholm2015}
P.~Denholm, M.~O'Connell, G.~Brinkman, and J.~Jorgenson, ``Overgeneration from
  solar energy in california. a field guide to the duck chart,'' {National
  Renewable Energy Lab (NREL)}, Tech. Rep. NREL/TP-6A20-65023, 2015.

\bibitem{IglesiasRossiEtAl2017}
R.~Iglesias, F.~Rossi, R.~Zhang, and M.~Pavone, ``A {BCMP} network approach to
  modeling and controlling autonomous mobility-on-demand systems,''
  \emph{{Int.\ Journal of Robotics Research}}, vol.~38, no. 2--3, pp. 357--374,
  2019.

\bibitem{PavoneSmithEtAl2012}
M.~Pavone, S.~L. Smith, E.~Frazzoli, and D.~Rus, ``Robotic load balancing for
  {Mobility-on-Demand} systems,'' \emph{{Int.\ Journal of Robotics Research}},
  vol.~31, no.~7, pp. 839--854, 2012.

\bibitem{RossiZhangEtAl2017}
F.~Rossi, R.~Zhang, Y.~Hindy, and M.~Pavone, ``Routing autonomous vehicles in
  congested transportation networks: Structural properties and coordination
  algorithms,'' \emph{{Autonomous Robots}}, vol.~42, no.~7, pp. 1427--1442,
  2018.

\bibitem{ZhangRossiEtAl2016b}
R.~Zhang, F.~Rossi, and M.~Pavone, ``Model predictive control of {Autonomous}
  {Mobility-on-Demand} systems,'' in \emph{{Proc.\ IEEE Conf.\ on Robotics and
  Automation}}, 2016.

\bibitem{Alonso-MoraSamaranayakeEtAl2017}
J.~Alonso-Mora, S.~Samaranayake, A.~Wallar, E.~Frazzoli, and D.~Rus,
  ``On-demand high-capacity ride-sharing via dynamic trip-vehicle assignment,''
  \emph{{Proceedings of the National Academy of Sciences}}, vol. 114, no.~3,
  pp. 462--467, 2017.

\bibitem{MaciejewskiBischoff2017}
M.~Maciejewski and J.~Bischoff, ``Congestion effects of autonomous taxi
  fleets,'' \emph{{Transport}}, 2017.

\bibitem{LevinKockelmanEtAl2017}
M.~W. Levin, K.~M. Kockelman, S.~D. Boyles, and T.~Li, ``A general framework
  for modeling shared autonomous vehicles with dynamic network-loading and
  dynamic ride-sharing application,'' \emph{Computers, Environment and Urban
  Systems}, vol.~64, pp. 373 -- 383, 2017.

\bibitem{FiedlerCertickyEtAl2018}
D.~Fiedler, M.~{\v C}ertick{\'y}, J.~Alonso-Mora, and M.~{\v C}{\'a}p, ``The
  impact of ridesharing in mobility-on-demand systems: Simulation case study in
  {Prague},'' in \emph{{Proc.\ IEEE Int.\ Conf.\ on Intelligent Transportation
  Systems}}, 2018, in press.

\bibitem{RoteringIlic2011}
N.~Rotering and M.~Ilic, ``Optimal charge control of plug-in hybrid electric
  vehicles in deregulated electricity markets,'' \emph{{IEEE Transactions on
  Power Systems}}, vol.~26, no.~3, pp. 1021--1029, 2011.

\bibitem{TusharSaadEtAl2012}
W.~Tushar, W.~Saad, H.~V. Poor, and D.~B. Smith, ``Economics of electric
  vehicle charging: A game theoretic approach,'' \emph{{IEEE Transactions on
  Power Systems}}, vol.~3, no.~4, pp. 1767--1778, 2012.

\bibitem{GoekeSchneider2015}
D.~Goeke and M.~Schneider, ``Routing a mixed fleet of electric and conventional
  vehicles,'' \emph{{European Journal of Operational Research}}, vol. 245,
  no.~1, pp. 81--99, 2015.

\bibitem{PourazarmCassandrasEtAl2016}
S.~Pourazarm, C.~G. Cassandras, and T.~Wang, ``Optimal routing and charging of
  energy-limited vehicles in traffic networks,'' \emph{{Int.\ Journal of Robust
  and Nonlinear Control}}, vol.~26, no.~6, pp. 1325--1350, 2016.

\bibitem{WangLinEtAl2010}
L.~Wang, A.~Lin, and Y.~Chen, ``Potential impact of recharging plug-in hybrid
  electric vehicles on locational marginal prices,'' \emph{{Naval Research
  Logistics}}, vol.~57, no.~8, pp. 686--700, 2010.

\bibitem{KemptonTomic2005a}
W.~Kempton and J.~{Tomi\'c}, ``Vehicle-to-grid power fundamentals: Calculating
  capacity and net revenue,'' \emph{{Journal of Power Sources}}, vol. 144,
  no.~1, pp. 268--279, 2005.

\bibitem{KhodayarWuEtAl2013}
M.~E. Khodayar, L.~Wu, and Z.~Li, ``Electric vehicle mobility in
  transmission-constrained hourly power generation scheduling,'' \emph{{IEEE
  Transactions on Smart Grid}}, vol.~4, no.~2, pp. 779--788, 2013.

\bibitem{AhujaMagnantiEtAl1993}
R.~K. Ahuja, T.~L. Magnanti, and J.~B. Orlin, \emph{Network Flows: Theory,
  Algorithms and Applications}.\hskip 1em plus 0.5em minus 0.4em\relax
  {Prentice Hall}, 1993.

\bibitem{Wardrop1952}
J.~G. Wardrop, ``Some theoretical aspects of road traffic research,''
  \emph{{Proc.\ of the Institution of Civil Engineers}}, vol.~1, no.~3, pp.
  325--362, 1952.

\bibitem{Kerner2009b}
B.~S. Kerner, \emph{Introduction to modern traffic flow theory and control: the
  long road to three-phase traffic theory}, 1st~ed.\hskip 1em plus 0.5em minus
  0.4em\relax {Springer Berlin Heidelberg}, 2009.

\bibitem{BPR1964}
{Bureau of Public Roads}, ``Traffic assignment manual,'' {U.S. Dept.\ of
  Commerce, Urban Planning Division}, Tech. Rep., 1964.

\bibitem{Mittelmann2016}
\BIBentryALTinterwordspacing
H.~D. Mittelmann. (2016) Decision tree for optimization software.
  http://plato.asu.edu/guide. [Online]. Available:
  \url{http://plato.asu.edu/guide}
\BIBentrySTDinterwordspacing

\bibitem{KirschenStrbac2004}
D.~S. Kirschen and G.~Strbac, \emph{Fundamentals of Power System Economics},
  1st~ed.\hskip 1em plus 0.5em minus 0.4em\relax {John Wiley \& Sons}, 2004.

\bibitem{GloverSarmaEtAl2011}
J.~Glover, M.~Sarma, and T.~Overbye, \emph{Power System Analysis and Design},
  5th~ed.\hskip 1em plus 0.5em minus 0.4em\relax {Cengage Learning}, 2011.

\bibitem{CheungShamsollahiEtAl1999}
K.~W. Cheung, P.~Shamsollahi, D.~Sun, J.~Milligan, and M.~Potishnak, ``Energy
  and ancillary service dispatch for the interim iso new england electricity
  market,'' in \emph{{IEEE Int.\ Conf.\ on Power Industry Computer
  Applications}}, May 1999, pp. 47--53.

\bibitem{MahutFlorianEtAl2008}
M.~Mahut, M.~Florian, and N.~Tremblay, ``Comparison of assignment methods for
  simulation-based dynamic-equilibrium traffic assignment,'' in \emph{{Annual
  Meeting of the Transportation Research Board}}, 2008.

\bibitem{Levin2017}
M.~W. Levin, ``Congestion-aware system optimal route choice for shared
  autonomous vehicles,'' \emph{{Transportation Research Part C: Emerging
  Technologies}}, vol.~82, pp. 229 -- 247, 2017.

\bibitem{Hogan1996}
W.~W. Hogan, ``Markets in real electric networks require reactive prices,'' in
  \emph{Electricity Transmission Pricing and Technology}.\hskip 1em plus 0.5em
  minus 0.4em\relax Dordrecht: Springer Netherlands, 1996, ch.~7.

\bibitem{StottJardimEtAl2009}
B.~Stott, J.~Jardim, and O.~Alsa{\c{c}}, ``{DC} power flow revisited,''
  \emph{{IEEE Transactions on Power Systems}}, vol.~24, no.~3, pp. 1290--1300,
  2009.

\bibitem{ONeillDautelEtAl2011}
R.~P. O'Neill, T.~Dautel, and E.~Krall, ``Recent {ISO} software enhancements
  and future software and modeling plans,'' Federal Energy Regulatory
  Commission, Tech. Rep., 2011.

\bibitem{OverbyeChengEtAl2004}
T.~J. Overbye, X.~Cheng, and Y.~Sun, ``A comparison of the {AC} and {DC} power
  flow models for {LMP} calculations,'' in \emph{{Hawaii Int.\ Conf.\ on System
  Sciences}}, 2004.

\bibitem{WangNegrete-PinceticEtAl2012}
G.~Wang, M.~Negrete-Pincetic, A.~Kowli, E.~Shafieepoorfard, S.~Meyn, and U.~V.
  Shanbhag, ``Dynamic competitive equilibria in electricity markets,'' in
  \emph{Control and optimization methods for electric smart grids}.\hskip 1em
  plus 0.5em minus 0.4em\relax {Springer}, 2012.

\bibitem{Bertsekas1999}
D.~Bertsekas, \emph{Nonlinear programming}, 2nd~ed.\hskip 1em plus 0.5em minus
  0.4em\relax {Athena Scientific}, 1999.

\bibitem{FHA2014}
\BIBentryALTinterwordspacing
{Federal Highway Administration}, ``{Census Transportation Planning Products
  (CTTP) 2006-2010 Census Tract Flows},'' {U.S. Dept.\ of Transportation},
  Tech. Rep., 2014. [Online]. Available:
  \url{https://www.fhwa.dot.gov/planning/census\_issues/ctpp/data\_products/2006-2010\_tract\_flows/}
\BIBentrySTDinterwordspacing

\bibitem{UniStatesCensusBureau2017}
{United States Census Bureau}. (2017) {American Community Survey. Commuting in
  the United States: 2009. Supplemental Table B: Time of Departure.} {Available
  at }\url{https://www.census.gov/hhes/commuting/data/commuting.html}.

\bibitem{HaklayWeber2008}
M.~Haklay and P.~Weber, ``{OpenStreetMap}: User-generated street maps,''
  \emph{{IEEE Pervasive Computing}}, vol.~7, no.~4, pp. 12--18, 2008.

\bibitem{Boeing2017}
G.~Boeing, ``{OSMnx}: New methods for acquiring, constructing, analyzing, and
  visualizing complex street networks.'' \emph{{Computers, Environment and
  Urban Systems}}, vol.~65, pp. 126--139, Sep. 2017.

\bibitem{ICSEG2016}
\BIBentryALTinterwordspacing
{Illinois Center for a Smarter Electric Grid (ICSEG)}. (2016) Texas 2000-{June}
  2016 synthetic power case. {Information Trust Inst.\, Univ.\ of Illinois at
  Urbana-Champaign, Coordinated Science Laboratory}. [Online]. Available:
  \url{http://icseg.iti.illinois.edu/synthetic-power-cases/texas2000-june2016/}
\BIBentrySTDinterwordspacing

\bibitem{EIA2017}
{EIA}, ``Levelized cost and levelized avoided cost of new generation resources
  in the annual energy outlook 2017,'' {U.S. Energy Information
  Administration}, Tech. Rep., 2017.

\bibitem{ERCOT2017}
{Electric Reliability Council of Texas (ERCOT)}. (2017) Grid information.
  {Available at }\url{http://www.ercot.com/gridinfo/}.

\bibitem{AlizadehWaiEtAl2017}
M.~Alizadeh, H.-t. Wai, A.~Goldsmith, and A.~Scaglione, ``Optimal electricity
  pricing for societal infrastructure systems,'' in \emph{{Hawaii Int.\ Conf.\
  on System Sciences}}, 2017.

\bibitem{IglesiasRossiEtAl2018}
R.~Iglesias, F.~Rossi, K.~Wang, D.~Hallac, J.~Leskovec, and M.~Pavone,
  ``Data-driven model predictive control of autonomous mobility-on-demand
  systems,'' in \emph{{Proc.\ IEEE Conf.\ on Robotics and Automation}}, 2018.

\bibitem{EIA2018}
{EIA}, ``Electric power annual 2016,'' {U.S. Energy Information
  Administration}, Tech. Rep., 2018.

\bibitem{LEI2013}
\BIBentryALTinterwordspacing
{London Economics International}, ``Estimating the value of lost load -
  briefing paper prepared for the {Electric Reliability Council of Texas, Inc.}
  by {London Economics International LLC},'' Tech. Rep., 2013. [Online].
  Available:
  \url{http://www.ercot.com/content/gridinfo/resource/2015/mktanalysis/ERCOT_ValueofLostLoad_LiteratureReviewandMacroeconomic.pdf}
\BIBentrySTDinterwordspacing

\end{thebibliography}

\jvspace{-5mm}
\begin{appendix}
\subsection{Agent-based simulations of P-AMoD}
\label{sec:realtime}
\ifextendedv
In this appendix we present agent-based simulations to further explore the impact of P-AMoD on the electric power network \revTCNS{and assess the impact of stochasticity on the performance of the proposed approach}.

\subsubsection{A Receding-Horizon Algorithm for P-AMoD}
~First, by leveraging the structural insights from the network flow optimization problem, along with a few mild assumptions, we devise a computationally efficient control algorithm that solves the \mbox{P-AMoD} Problem \eqref{eq:P-AMoD} in a receding-horizon fashion.

To reduce the computational complexity of the optimization problem, we \emph{decouple} the customer routing process from the P-AMoD optimization.
The key assumption is that customer-carrying trips follow pre-computed routes and are never interrupted by a charging/discharging event. 
Formally, customer trips from node $i\in\V_R$ to node $j\in\V_R$ follow a fixed route with a travel time of  $t_{i\rightarrow j}$ and a required charge of $c_{i\rightarrow j}$. 
Thus, customer flows $\{f_{B,d_B}(\mathbf{u},\mathbf{v})\}_{(\mathbf{u},\mathbf{v}),d_B}$ are no longer part of the optimization variables and Equation \eqref{eq:bundledcustbal_conservation} is redundant. However, the initial and final charge of the customer-carrying vehicles $\{\lambda^{c,\text{in}}_m\}$ and $\{\lambda^{t,c,\text{out}}_m\}$ remain optimization variables.
The following constraint ensures that charge is conserved along customer routes, that is, that vehicles traveling from $i$ to $j$ and departing at time $t$ at charge level $c$  arrive at time $t+t_{i\rightarrow j}$ with charge $c-c_{i\rightarrow j}$:

{
\jvspace{-1mm}
\begin{align}
&\lambda_m^{t,c,\text{out}} = 
\begin{cases}
\lambda_m^{c+c_{v_m\rightarrow w_m},\text{in}} &\text{if } t_{m}=t-t_{v_m\rightarrow w_m}\\
0 & \text{otherwise}
\end{cases} \label{eq:custbalfast_conservation}
\end{align}}
for all $t\in \{1, \ldots, T\}, c\in\{1, \ldots, C\}, m\in\{1, \ldots, M\}$. 
The cost function is also modified to remove the customers' travel times, and road congestion constraints are adjusted to account for the traffic induced by customer-carrying vehicles.
Specifically, the congestion induced by customer-carrying vehicles is \emph{fixed} for given customer demand, since customers follow pre-defined routes. We denote the residual capacity of a road link $(v_\mathbf{v},v_\mathbf{w})\in  \mathcal{E}_R$ at time $t$ (i.e., the capacity of the road link once AMoD customer-carrying trips are accounted for) as $\overline{f}_{(v_\mathbf{v},v_\mathbf{w}),t}$.
The congestion constraints on road links \eqref{eq:congestion} become
\begin{align}
&\sum_{c_\mathbf{v}=1}^C f_0(\mathbf{v},\mathbf{w})\leq \overline{f}_{(v_\mathbf{v},v_\mathbf{w}),t_\mathbf{v}}, \nonumber \\
 &\quad \quad \quad \quad \forall (v_\mathbf{v},v_\mathbf{w}) \in \mathcal{E}_R, t_\mathbf{v}\in \{1,\ldots,T\}.
\label{eq:congestionfast}
\end{align}

Only rebalancing vehicles traverse the charging and discharging links: thus, the capacity constraint of the charging stations  \eqref{eq:chargercap} and the coupling equation \eqref{eq:buscoupling} are rewritten as

{
\begin{align}
 \sum_{
{\substack{(\mathbf{v},\mathbf{w})\in \mathcal{E}_S:\\ 
v_\mathbf{v}=v_{\mathbf{w}}=v
}}}  f_0(\mathbf{v},\mathbf{w})  &\leq \overline{S}_{v_{\mathbf{v}}}, 
&\forall v\in \mathcal{S}, t\in\{1,\ldots, T\},
\label{eq:chargercapfast}
\end{align}
}

{
\begin{align}
\label{eq:buscouplingfast}
d_l(t)=&d_{l,e}(t)+J_C\delta c^+_{\mathcal{M}_{\text{P,R}}(l)} 
\sum_{\substack{(\mathbf{v},\mathbf{w})\in\\ M^+_{P,G}(l,t)}} 
 f_0(\mathbf{v},\mathbf{w})  \nonumber \\
 &  + J_C \delta c^-_{\mathcal{M}_{\text{P,R}}(l)} 
\sum_{\substack{(\mathbf{v},\mathbf{w})\in\\ M^-_{P,G}(l,t)}} f_0(\mathbf{v},\mathbf{w}), \nonumber \\
  & \quad \quad\quad\quad   \forall l\in \mathcal{L}, t\in\{1,\ldots,T\}.
\end{align} 
}

In order to adapt the problem  for use in a receding-horizon implementation, several further modifications are required. Specifically,
\begin{itemize}
\item{\emph{Outstanding customers:}}
In Problem \eqref{eq:P-AMoD}, future customer demand is assumed to be perfectly known -- conversely, in a real-time implementation, unforeseen transportation requests may be originated at any time. As a result, some customers may not be assigned to a vehicle when they arrive at their departure node. We denote the set of such waiting customers as  \emph{outstanding requests}.
Outstanding requests are assumed to wait at the departure station until a vehicle is available. 
The departure time of the outstanding requests is  an optimization variable; the goal is to service such requests as quickly as possible.

Formally, outstanding requests are characterized as the set of $M^o$ tuples $\{(v_{m^o},w_{m^o},\lambda_{m^o})\}_{m^o=1}^{M^o}$, where $v_{m^o}\in V_R$ is the outstanding request's origin location, $w_{m^o}\in\V_R$ is the outstanding request's destination location, and $\lambda_{m^o}$ is the 
average arrival rate (i.e., the number of outstanding customers divided by the duration of one time step). %
For each outstanding request $m^o\in\{1,\ldots,M^o\}$, the set of variables $\{\lambda^{t,c,  \text{in}}_{m^o}\}_{c,t}$ denotes the number of customers  per unit time departing at time $t$ at charge level $c$; in analogy with customer requests, the set of variables $\{\lambda^{t,c, \text{out}}_{m^o}\}_{t,c} $ denotes the number of customers per unit time reaching the destination at time $t$ with charge level $c$. Both are optimization variables. The following constraints ensure that outstanding requests are serviced within the optimization horizon, in analogy with Equations \eqref{eq:bundledcustbal_sumsource} and \eqref{eq:custbalfast_conservation} for regular customers:
\begin{subequations}
\begin{align}
&\sum_{t=1}^T \sum_{c=1}^C \lambda_{m^o}^{t,c,\text{in}} = \lambda_{m^o}, \quad \quad \forall {m^o}\in \{1,\ldots,M^o\}, \label{eq:outstandingcustbal_sumsource}\\
&\lambda_{m^o}^{t,c,\text{out}} = 
\lambda_{m^o}^{t-t_{v_{m^o}\rightarrow w_{m^o}},c+c_{v_{m^o}\rightarrow w_{m^o}},\text{in}}  \label{eq:outstandingcustbal_conservation} \\
& \quad \quad  \forall t\in \{1, \ldots, T\}, c\in\{1, \ldots, C\}, {m^o}\in\{1, \ldots, M^o\} \nonumber.
\end{align}
\label{eq:outstandingcustbal}
\end{subequations}

The overall wait time for outstanding customers can then be computed as
\[
T_M^o = \sum_{{m^o}=1}^{M^o} t \sum_{c=1}^C \lambda_{m^o}^{t,c,\text{in}}.
\]

\item{\emph{Vehicle end charge:}}
In order to achieve satisfactory closed-loop performance and to trade off between servicing present demand and ensuring vehicles are available for future customers,  the final charge level of rebalancing vehicles is constrained to be higher than a given threshold $\underline{C}^T$:
\begin{equation}
f_0(\mathbf{v},\mathbf{w})=0 \quad \forall (\mathbf{v},\mathbf{w})\in\mathcal{E}: c_\mathbf{w}\leq \underline{C}^T, t_\mathbf{w}=T \label{eq:realtime-minendcharge}
\end{equation}

\item{\emph{Feasibility:}}
Problem \eqref{eq:P-AMoD} is not guaranteed to admit a solution for arbitrary transportation requests and arbitrary numbers of vehicles. To ensure persistent feasibility of the receding-horizon controller, slack variables (associated with a high cost) are introduced in Equations \eqref{eq:bundledcustbal_sumsource} and \eqref{eq:outstandingcustbal_sumsource}, allowing customer requests to be dropped to preserve feasibility. As a result, so long as the Economic Dispatch problem is feasible, the P-AMoD problem always admits a feasible solution where no customers are transported and no vehicle moves, charges, or discharges, ensuring persistent feasibility.
\end{itemize}

\paragraph{A receding-horizon controller}
We are now in a position to present the receding-horizon P-AMoD problem. We denote the distance traveled by the rebalancing vehicles as
\[
 D_V^0=\sum_{(\mathbf{v},\mathbf{w})\in \mathcal{E}} d_{v_\mathbf{v},v_\mathbf{w}} f_0(\mathbf{v},\mathbf{w}),
\]
and the depreciation of the rebalancing vehicles' batteries as
\[
V_B^0 = V_B=\sum_{(\mathbf{v},\mathbf{w})\in\mathcal{E}_S} f_0(\mathbf{v},\mathbf{w}) |\delta c_{v_\mathbf{v}}| d_B.
\]

 We pose the receding-horizon  P-AMoD problem as
\begin{align}
\label{eq:P-AMoD-RT}
\,\,\,\, & \underset{\mathclap{\substack{f_0, \lambda^{c, \text{in}}_m, \lambda^{t,c, \text{out}}_m,\\\lambda^{t,c, \text{in}}_{m^o}, \lambda^{t,c, \text{out}}_{m^o}, N_F,\theta,p}} }{\text{minimize}} 
	& & T^o_M + V_D D_V^0+ V_B^0+ C_G\\%
	& \text{subject to} && \eqref{eq:rebbal}, \eqref{eq:bundledcustbal_sumsource}, \eqref{eq:powerdispatch}, \eqref{eq:custbalfast_conservation}, \eqref{eq:congestionfast}, \nonumber \\
	&&&\eqref{eq:chargercapfast}, \eqref{eq:buscouplingfast}, \eqref{eq:outstandingcustbal}, \text{ and } \eqref{eq:realtime-minendcharge} \nonumber
\end{align}

Problem \eqref{eq:P-AMoD-RT} has $O(|\mathcal{E}| + MC + |\mathcal{V}_R| C + T(|\mathcal{G}|+|\mathcal{E}_p| + |\mathcal{B}|))$variables: compared to Problem \eqref{eq:P-AMoD}, the problem size does not depend on the product of $|\mathcal{E}|$ and $|\V_R|$, resulting in an order-of-magnitude reduction in the overall number of required variables for prototypical problems.

\paragraph{Fractional output}
In order to adapt Problem \eqref{eq:P-AMoD-RT} for real-time control of AMoD systems, one last difficulty must be overcome.
The output of the problem is, in general, fractional: therefore it can not directly be used for control of individual vehicles. To overcome this, control actions are computed by \emph{sampling} the first time step of the fractional optimal solution to Problem \eqref{eq:P-AMoD-RT} in a receding-horizon framework.
In detail,
\begin{itemize}
\item \emph{Customer requests}: We recall that the intensity of a customer request $\lambda_m$ denotes the number of individual customers belonging to the request; each customer should be serviced by a separate vehicle. For each customer request $m$ departing at time $t=1$, the probability of selecting a vehicle with charge level $c$ to service customer request $m$ is set to $p_m(c)=\lambda_m^{c,\text{in}}/\sum_{\xi=1}^C \lambda_m^{\xi,\text{in}}$. One sample per customer is drawn from the distribution $p_m$, for a total of $\lceil \lambda_m\rceil $ samples.
Each customer is then assigned to a vehicle with a charge level corresponding to the sampled charge level. 
Analogously, for each  outstanding customer request $m^o$, the probability of departing at time $t$ and charge level $c$ is set to $p_{m^o}(t,c)=\lambda_{m^o}^{t,c,\text{in}}/\sum_{\tau=1}^T \sum_{\xi=1}^C \lambda_{m^o}^{\tau,\xi,\text{in}}$.
The number of samples drawn is $\lceil \lambda_{m^o} \rceil$, corresponding to the number of outstanding customers belonging to the request. Outstanding customers are assigned to a vehicle if the sampled departure time is $t=1$; in that case, the charge level of the vehicle corresponds to the sampled charge level. If no vehicles at the sampled charge level are available, a fall-back strategy is adopted where the customer is assigned to the closest vehicle with charge level  sufficient to complete the trip.

\item \emph{Idle vehicles}: Charging, discharging, and rebalancing actions are sampled from the distribution of the rebalancing flow $\{f_0(\mathbf{v},\mathbf{w})\}_{(\mathbf{v},\mathbf{w})}$. Specifically, for each node $v\in\V_R$ and each charge level $c\in\{1,\ldots, C\}$,
each edge $(\mathbf{v},\mathbf{w})$ with $\mathbf{v}=(v,c,1)$ is assigned a probability $p(\mathbf{v},\mathbf{w}) = f_0(\mathbf{v},\mathbf{w})/ \sum_{\mathbf{w'}:(\mathbf{v},\mathbf{w'})\in\mathcal{E}} f_0(\mathbf{v},\mathbf{w'} )$. One sample is drawn from $p(\mathbf{v},\mathbf{w})$ for each  vehicle charging, discharging, or rebalancing at node $v$ at charge $c$ and time $t=1$.
If the sampling procedure selects an edge corresponding to a charging link, a charging task is assigned to the vehicle; if an an edge corresponding to a discharging link is sampled, the vehicle is assigned a discharging task; if an edge corresponding to a road link is sampled, the vehicle is required to rebalance to the destination of the sampled road link.
\end{itemize}

The overall receding-horizon controller is presented in Algorithm \ref{alg:rt-sampling}.
\begin{algorithm}
\caption{Real-time receding-horizon algorithm for the P-AMoD problem}
\label{alg:rt-sampling}
\begin{algorithmic}
\Procedure{RHController}{Customer requests, vehicle states}
\State $\{f_0, \lambda^{c, \text{in}}_m, \lambda^{t,c, \text{out}}_m,\lambda^{t,c, \text{in}}_{m^o}, \lambda^{t,c, \text{out}}_{m^o}\}\gets $ Solve Problem \eqref{eq:P-AMoD-RT}
\ForAll{customer request $m$ with $t_m=1$}
	\State $p_m(c) \gets \lambda_m^{c,\text{in}}/\sum_{\xi=1}^C \lambda_m^{\xi,\text{in}} \quad \forall c\in\{1,\ldots, C\}$
	\State CustCharge($m$) $\gets $Sample $\lceil\lambda_m\rceil$ times from $\{p_m(c)\}_c$ 
\EndFor

\ForAll{outstanding customer request $m^o$}
	\State $p_{m^o}(t,c) \gets \lambda_{m^o}^{t,c,\text{in}}/\sum_{\tau=1}^T \sum_{\xi=1}^C \lambda_{m^o}^{\tau,\xi,\text{in}} \forall t\in\{1,\ldots, T\}, c\in\{1,\ldots, C\}$
	\ForAll {customers $\in\{1,\ldots,\lceil\lambda_m\rceil\}$}
	\State($t^o, c^o$) $\gets$ Sample from $\{p_{m^o}(t,c)\}_{t,c} $
	\If{$t^o=1$}
		\State CustCharge($m^o$) $\gets$ append $t^o$
	\EndIf	
	\EndFor
\EndFor
\ForAll{node $v\in\V_R$}
\ForAll{charge level $c\in[1,\ldots, C]$}
\State $r$ $\gets$ $\sum_{(\mathbf{v},\mathbf{w}): v_\mathbf{v}=v, t_\mathbf{v}=1,c_\mathbf{v}=c, }  f_0(\mathbf{v},\mathbf{w})$
\State $p(\mathbf{v},\mathbf{w}) \gets f_0(\mathbf{v},\mathbf{w})/ \sum_{\mathbf{w'}:(\mathbf{v},\mathbf{w'})\in\mathcal{E}} f_0(\mathbf{v},\mathbf{w'} )$ $\quad \forall  (\mathbf{v},\mathbf{w})\in\mathcal{E}$ such that $\mathbf{v}=(v,c,1)$
\For{$a=1,\ldots,r$}
\State $(\mathbf{v},\mathbf{w})$ $\gets$ Sample from $\{p(\mathbf{v},\mathbf{w})\}_{(\mathbf{v},\mathbf{w})}$
\If{$(\mathbf{v},\mathbf{w})$ is a charging link}
\State Task $\gets$ Charge at $v$
\ElsIf{$(\mathbf{v},\mathbf{w})$ is a discharging link}
\State Task $\gets$ Discharge at $v$
\ElsIf{$(\mathbf{v},\mathbf{w})$ is a road link}
\State Task $\gets$ Rebalance from $v_\mathbf{v}$ to $v_\mathbf{w}$
\EndIf
\State IdleTasks($v,c$) $\gets$ Append Task
\EndFor
\EndFor
\EndFor
\Return CustCharge, IdleTasks
\EndProcedure
\end{algorithmic}
\end{algorithm}

\subsubsection{Agent-based simulations}
\else
In this appendix we present agent-based simulations that further explore the impact of P-AMoD on the electric power network \revTCNS{and assess the impact of stochasticity on the performance of the proposed approach} by solving the \mbox{P-AMoD} Problem \eqref{eq:P-AMoD} in a receding-horizon fashion.
 Due to space limitations, we only provide a high-level description of the approach and results: a detailed description is provided in the Extended Version of this paper\footnote{Available at \url{http://arxiv.org/abs/1709.04906}}.
\fi

\revTCNS{We assess the performance of  \ifshortest ~a receding-horizon implementation of the~ \else~ the receding-horizon~\fi \mbox{P-AMoD} controller with an agent-based simulation where a set of 1,257,916 individual commuting trips in Dallas-Fort Worth (based on \ifshortest~  American Communities Survey data\else~data from the American Communities Survey\fi) %
are serviced by an AMoD  fleet of 450,000 vehicles. The behavior of individual commuters and vehicles is tracked through an agent-based simulator.
Road congestion is modeled through the Bureau of Public Roads (BPR) model \cite{BPR1964}.
The receding-horizon P-AMoD controller does \emph{not} have access to the actual demand for transportation or for power; rather, the controller has access to noisy estimates corrupted by Gaussian noise.
The standard deviation of the transportation demand noise is 10\% of the mean (a very conservative figure compared to the performance of state-of-the-art tools for estimation of customer demand\ifshortest\else~\cite{IglesiasRossiEtAl2018}\fi) and the standard deviation of the exogenous power demand noise is 5\% of the mean.
Thus, the simulations characterize the behavior of the proposed P-AMoD controller in the presence of a high level of stochasticity in the demand for transportation and for power.
}

\ifshortest
\else
In the simulation, the generation costs are based on the \emph{marginal} cost of generation (from EIA estimates\ifextendedv~\cite[Table 8.4]{EIA2018}\fi)%
, to reproduce the strategic behavior of generator operators participating in a real-time electricity market. %
The uncoordinated controller may cause the power network to become unstable, causing the Economic Dispatch problem \eqref{eq:powerdispatch} to become infeasible. To account for this, 
we introduce slack variables in the power network balance equations \eqref{eq:powerbal} to capture the ISO's ability to disconnect loads to preserve the stability of the power network. The cost associated with the slack variables captures the economic loss borne by ISO users during a blackout (denoted as ``Value of Lost Load'' in the literature) and is set to \$6,000/MWh in accordance with ERCOT estimates\ifextendedv~\cite{LEI2013}\fi.
\fi

\ifshortest
\else
The receding-horizon problem is solved every 5 minutes with a 4-hour lookahead and a 15-minute time step.
\fi
The performance of the algorithm is  compared with a baseline case where no vehicles are present and an uncoordinated receding-horizon controller that optimizes the AMoD system's operations under the assumption that electricity prices stay constant.
\revTCNS{
 Table \ref{tab:results-real-time} shows the results.
\begin{table}[ht]
\centering
\caption{Real-time algorithm simulation results (10 hours).}
	\begin{tabular}{rrrrrrr}
	& \multicolumn{1}{c}{\!Baseline} & \multicolumn{1}{c}{P-AMoD} & \multicolumn{1}{c}{Uncoord.} \\ 
	\hline
	Avg. cust. travel time [h] & - & 1.594 & 1.559  \\
	Tot. energy demand [GWh] & 500.01 & 507.77 & 507.89  \\
		Blackouts [MWh] & 0 & 0 & 61.19\vspace{1mm}\\

	Tot. elec. expenditure, excl. TSO [k\$]& 15,067  & 15,067  & 17,845 \\
	Avg. price in DFW [\$/MWh] & 30.136 & 30.222 & 45.430 \\ %
	TSO tot. elec. expenditure [k\$] & - & 240.71  & 4,445.56 \\ %
		\end{tabular}
\label{tab:results-real-time}
\end{table}

\ifshortest
In absence of coordination, the AMoD system causes rolling blackouts in Dallas-Fort Worth: the Economic Dispatch problem  is infeasible for 79 of the  600 minutes considered in the simulation, an overall 61.19 MWh of power are not delivered to end users, and the average electricity price in Dallas-Fort Worth is  50\% higher than in the baseline case. The TSO's expenditure is over
18
times higher compared to the coordinated case.
Conversely, the P-AMoD system is able to ensure that the unit price of electricity in Dallas-Fort Worth only increases by 0.29\% compared to the case where no vehicles are present, despite the 4.84\% increase in power demand in the Dallas-Fort Worth region and the high level of uncertainty in the prediction of customer demand.  
\else
In absence of coordination, the AMoD system causes rolling blackouts in Dallas-Fort Worth: the Economic Dispatch problem  is infeasible for 79 of the  600 minutes considered in the simulation, and overall 61.19 MWh of power are not delivered to end users. The average electricity price in Dallas-Fort Worth is  \$45.43/MWh, 50\% higher than in the baseline case; across Texas,  the average price of electricity is 
\$43.23/MWh, and the total electricity expenditure for power network customers is over
16\%
higher compared to the case where no vehicles are present. The TSO's expenditure is over
18
times higher compared to the coordinated case.
Conversely, the P-AMoD system is able to ensure that the unit price of electricity (and therefore the expenditure of power network customers) in Dallas-Fort Worth and across Texas only increases by 0.29\% compared to the case where no vehicles are present, despite the 4.84\% increase in power demand in the Dallas-Fort Worth region and the high level of uncertainty in the prediction of customer demand. 
\fi
Thus, coordination between the AMoD system and the power network is vital to ensuring the stability of the power network. In absence of coordination, mass deployment of AMoD systems can heavily destabilize the power network, resulting in blackouts and excessive electricity prices; conversely, coordination is able to ensure that power prices remain virtually constant despite the increase in power demand, and is robust to large unmodelled stochastic variations in demand for transportation and for power.
}

\revTCNS{The receding-horizon P-AMoD problem was solved in an average of
61s
and a maximum of
162s; thus, the algorithm is amenable to closed-loop control of large-scale systems.}

\subsection{Proofs of all theorems}
\begin{proof}[Proof of Lemma \ref{lemma:bundledflows}]
The proof is constructive. 
 First we leverage the flow decomposition algorithm to decompose the bundled customer flow in a collection of path flows; next, we assign each path flow to a customer request; finally, we merge the path flows assigned to each request to obtain a feasible customer flow.
We assume without loss of generality that no two customer requests have the same origin node $v_m\in\mathcal{V}_R$, destination node $w_m\in\mathcal{V}_R$, and departure time $t_m\in\{1,\ldots, T\}$. Since customer routes are approximated as a network flow, if two or more such requests exist, they can be equivalently represented by a single request with intensity equal to the sum of the original requests' intensities.

 Define as path flow a network flow that has a fixed intensity on edges belonging to a path without cycles from the origin to the destination and zero otherwise. The flow decomposition algorithm \cite[Ch. 3.5]{AhujaMagnantiEtAl1993} can decompose the bundled customer flow into path flows. Specifically, the algorithm computes a collection of path flows $\mathcal{P}=\{f_p(\mathbf{u},\mathbf{v})\}_{p,(\mathbf{u},\mathbf{v})\in\mathcal{E}}$ such that, for every edge $(\mathbf{u},\mathbf{v})\in\mathcal{E}$, $\sum_p f_p(\mathbf{u},\mathbf{v}) = f_{B,d_B}(\mathbf{u},\mathbf{v})$. Each path flow $p\in\mathcal{P}$ has a single origin node $\mathbf{v}\in\mathcal{V} $ and destination node $\mathbf{w}\in\mathcal{V}$ with $v_\mathbf{w}=d_B$. %
Next, we assign each path flow to a customer request $(v_m,d_B,t_m,\lambda_m)$. %
Specifically, we decompose the path flows $\mathcal{P}$ in a collection of disjoint sets  $\{\mathcal{P}_{m}\}_m$ such that $\cup_{m=1}^M \mathcal{P}_m=\mathcal{P}$ and $\mathcal{P}_{m} \cap \mathcal{P}_{m'}=0$ for all $ m, m' \in \{1,\ldots, M\}$.
To do so, we assign all the path flows whose  origin node belongs to the set $\{\mathbf{v}=(v_m,t_m,c)\}_{c=1}^C$ to request $m$. By assumption, no two requests with the same destination $d_B$ can have the same origin location $v_m$ and departure time $t_m$: thus, every path flow is assigned to exactly one customer request $m$. The sum of the intensities of the path flows $p\in \mathcal{P}_m$ is $\lambda_m$; this property follows immediately from Equations \eqref{eq:bundledcustbal_conservation} and \eqref{eq:bundledcustbal_sumsource}. %
Finally, the customer flow for customer request $(v_m,d_B,t_m,\lambda_m)$ is obtained as the sum of the path flows in $\mathcal{P}_m$. By construction, each path flow satisfies Equation \eqref{eq:custbal:continuity}. Since the sum of the path flows equals $\lambda_m$, Equation \eqref{eq:custbal:intensity} is also satisfied by the sum of the path flows. This concludes the proof. 
\end{proof}

\begin{proof}[Proof of Theorem \ref{thm:equilibrium}]
The optimal solution to the P-AMoD problem also maximizes the revenue of the power generators  if locational marginal pricing is used \cite[Sec. 3]{WangNegrete-PinceticEtAl2012}. Thus, we focus on showing that the  optimal solution to the P-AMoD problem is also an optimal solution to the TSO's problem \eqref{eq:VRCP}.

The KKT stationarity conditions for the P-AMoD Problem \eqref{eq:P-AMoD}
for variables $\{f_\diamond, \lambda^{c, \text{in}}_m, \lambda^{t,c, \text{out}}_m, N_F\}$
are: 

\begingroup
{
\begin{subequations}
\begin{align}
&\frac{\partial(V_T T_M+V_D D_V + V_B)}{\partial f_\diamond(\mathbf{v},\mathbf{w})}+\transpose{\lambda^{\text{eq}}_\text{TSO}}\!\cdot\!\frac{\partial f_{\text{TSO}}^{\text{eq}}}{\partial f_\diamond(\mathbf{v},\mathbf{w})} + \transpose{\mu^{\text{ineq}}_\text{TSO}}\! \nonumber\\
& \cdot\!\frac{\partial f_{\text{TSO}}^{\text{ineq}}}{\partial f_\diamond(\mathbf{v},\mathbf{w})} +  \transpose{\lambda^{\text{eq}}_\text{ISO}}\!\cdot\!\frac{\partial f^{\text{eq}}_\text{ISO}}{\partial  f_\diamond(\mathbf{v},\mathbf{w})} +\transpose{\mu^{\text{ineq}}_\text{ISO}}\!\cdot\!\frac{\partial f_{\text{ISO}}^{\text{ineq}}}{\partial f_\diamond(\mathbf{v},\mathbf{w})}   \nonumber \\
& =0, \quad \forall \diamond \in \{0 \cup \{d_B \!\in\! \mathcal{D}\}\}, (\mathbf{v},\mathbf{w})\in\mathcal{E}, \label{eq:KKT-PAMoD-fm}\\
&\transpose{\lambda^{\text{eq}}_\text{TSO}}\cdot\frac{\partial  f_{\text{TSO}}^{\text{eq}}}{\partial \lambda^{c, \text{in}}_m} = 0, \quad  
 \forall c\in\{0,\ldots,C\}, m\in\{0, \ldots, M\}, \\
&\transpose{\lambda^{\text{eq}}_\text{TSO}} \cdot\frac{\partial f_{\text{TSO}}^{\text{eq}}}{\partial \lambda^{t,c, \text{out}}_m} = 0, \nonumber \\
& \quad 
 \forall c\in\{0,\ldots,C\}, t\in\{1,\ldots, T\}, m\in\{0, \ldots, M\}, \\
&\transpose{\lambda^{\text{eq}}_\text{TSO}} \cdot \frac{\partial f_{\text{TSO}}^{\text{eq}}}{\partial N_F(\mathbf{v})}=0, \quad \quad
 \forall \mathbf{v}\in\mathcal{V}.%
\end{align}
\label{eq:KKT-PAMoD}
\end{subequations}
}
\jvspace{-2mm}%
\endgroup

For a given set of variables $\{\theta^\star,p^\star \}$, the KKT conditions for Problem \eqref{eq:VRCP} are
{
\begin{subequations}
\begin{align}
&\frac{\partial(V_T T_M + V_D D_V + V_B )}{\partial f_\diamond(\mathbf{v},\mathbf{w})} \!+\! \frac{\partial(V_E)}{\partial f_\diamond(\mathbf{v},\mathbf{w})}\!+\!\transpose{\lambda^{\text{eq}}_\text{TSO}}\!\!\cdot\!\frac{\partial f_{\text{TSO}}^{\text{eq}}}{\partial f_\diamond(\mathbf{v},\mathbf{w})}  \nonumber \\
&+\!\transpose{\mu^{\text{ineq}}_\text{TSO}}\!\!\cdot\!\frac{\partial f_{\text{TSO}}^{\text{ineq}}}{\partial f_\diamond(\mathbf{v},\mathbf{w})}=0, 
{\forall \diamond \!\in\! \{0 \cup \{d_B \!\in\! \mathcal{D}\}\}, (\mathbf{v},\mathbf{w})\!\in\!\mathcal{E},\quad} \label{eq:KKT-VRCP-fm}\\
&\transpose{\lambda^{\text{eq}}_\text{TSO}}\cdot\frac{\partial  f_{\text{TSO}}^{\text{eq}}}{\partial \lambda^{c, \text{in}}_m} = 0, \quad 
 \forall c\in\{0,\ldots,C\}, m\in\{0, \ldots, M\}, \\
&\transpose{\lambda^{\text{eq}}_\text{TSO}} \cdot\frac{\partial f_{\text{TSO}}^{\text{eq}}}{\partial \lambda^{t,c, \text{out}}_m} = 0,\nonumber \\
& \quad 
 \forall c\in\{0,\ldots,C\}, t\in\{1,\ldots, T\}, m\in\{0, \ldots, M\},\\
&\transpose{\lambda^{\text{eq}}_\text{TSO}} \cdot\frac{\partial f_{\text{TSO}}^{\text{eq}}}{\partial N_F(\mathbf{v})}=0, \quad \quad
 \forall \mathbf{v}\in\mathcal{V}.
\end{align}
\end{subequations}
}
\jvspace{-3mm}

The second term in Equation \eqref{eq:KKT-VRCP-fm} is
{
\[
\frac{\partial(V_E)}{\partial f_\diamond(\mathbf{v},\mathbf{w})}=  1_{(\mathbf{v},\mathbf{w})\in\mathcal{E}_S} p_{(\mathbf{v},\mathbf{w})} \delta c_{v_\mathbf{v}},
\]
}
where $\delta c_{v_\mathbf{v}}=\delta c^+_{v_\mathbf{v}}$ if $c_\mathbf{w}>c_\mathbf{v}$ and  $\delta c_{v_\mathbf{v}}=\delta c^-_{v_\mathbf{v}}$ otherwise.

Leveraging Equation \eqref{eq:buscoupling}, the last two terms in Equation \eqref{eq:KKT-PAMoD-fm} can be rewritten as
{
\begin{align*}
&\transpose{\lambda^{\text{eq}}_\text{ISO}} \!\! \cdot \!\frac{\partial f^{\text{eq}}_\text{ISO}}{\partial  f_\diamond(\mathbf{v},\mathbf{w})} \!+\! \transpose{\mu^{\text{ineq}}_\text{ISO}}\!\!\cdot \! \frac{\partial f^{\text{ineq}}_\text{ISO}}{\partial  f_\diamond(\mathbf{v},\mathbf{w})} \!=\! 
 \sum_{l\in\mathcal{B}} \sum_{t=1}^T \Bigg[ \!\Big(\lambda^{\text{eq}}_\text{ISO}(l,t) 
    \\ &
  + \mu^{\text{ineq}}_\text{ISO}(l,t)\Big)
 \! \cdot \!
   \left( 1_{(\mathbf{v},\mathbf{w})\in M^+_{P,G}(l,t)} \!+\! 1_{(\mathbf{v},\mathbf{w})\in M^-_{P,G}(l,t)}\right) \Bigg]   J_C \delta c_{v_\mathbf{v}} 
.
\end{align*}
}

Every edge $(\mathbf{v},\mathbf{w})\in\mathcal{E}_S$ corresponds to a single load node $l\in\mathcal{B}: v_\mathbf{v}=\mathcal{M}_{\text{P,R}}(l)$ at a single time $t=t_\mathbf{v}$.Thus, the expression above can be rewritten as
{
\begin{align}
&\transpose{\lambda^{\text{eq}}_\text{ISO}}  \frac{\partial f^{\text{eq}}_\text{ISO}}{\partial  f_\diamond(\mathbf{v},\mathbf{w})} + \transpose{\mu^{\text{ineq}}_\text{ISO}}  \frac{\partial f^{\text{ineq}}_\text{ISO}}{\partial  f_\diamond(\mathbf{v},\mathbf{w})} \nonumber \\
& =  J_C \delta c_{v_w}\! \left(\lambda^{\text{eq}}_\text{ISO}(l_{v_{\mathbf{v}}},t_{\mathbf{v}}) +\mu^{\text{ineq}}_\text{ISO}(l_{v_{\mathbf{v}}},t_{\mathbf{v}}) \right), \label{eq:KKT-VRCP:dualprice}
\end{align}
}
where $ l_{v_{\mathbf{v}}}$ is such that $ v_\mathbf{v}=\mathcal{M}_{\text{P,R}}(l_{v_{\mathbf{v}}})$.

\revTCNSII{
Eq. \ref{eq:LMP} shows that the right-hand side of Eq. \ref{eq:KKT-VRCP:dualprice} equals the LMP at node $v_\mathbf{v}$.}
Therefore, Eq. \eqref{eq:KKT-VRCP-fm} and Eq. \eqref{eq:KKT-PAMoD-fm} are identical. As a result, the KKT conditions for the TSO's problem \eqref{eq:VRCP} are verified whenever the KKT conditions for the P-AMoD problem \eqref{eq:P-AMoD} are verified, and $\{f_\diamond^\star, \lambda^{c, \text{in}\star}_m, \lambda^{t,c, \text{out}\star}_m, N_F^\star\}$ is an optimal solution to Problem \eqref{eq:VRCP} for fixed $\{\theta^\star,p^\star \}$.

In conclusion, $\{f_\diamond^\star, \lambda^{c, \text{in}\star}_m, \lambda^{t,c, \text{out}\star}_m, N_F^\star\}$ is the solution to the TSO's Vehicle Routing and Charging Problem \eqref{eq:VRCP}  if the prices are set according to LMPs. In addition, the generation schedule $\{p^\star\}$ is the optimal (revenue-maximizing) schedule for self-interested power generators if the prices are set according to LMPs \cite[Sec. 3]{WangNegrete-PinceticEtAl2012}. That is, the set of variables $(\{f_\diamond^\star, \lambda^{c, \text{in}\star}_m, \lambda^{t,c, \text{out}\star}_m, N_F^\star\},\{\theta^\star\},\{p^\star \})$ is a general equilibrium for the P-AMoD market. This concludes the proof.
 \end{proof}

\end{appendix}
\jvspace{-2.3em}

\begin{IEEEbiography}[{\includegraphics[width=1in,height=1.2in,clip,keepaspectratio]{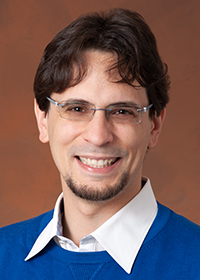}}]{Federico Rossi} is a Robotics Technologist at the Jet Propulsion Laboratory, California Institute of Technology.
He earned a Ph.D. in Aeronautics and Astronautics from Stanford University in 2018, a M.Sc. in Space Engineering from Politecnico di Milano and the Diploma from the Alta Scuola Politecnica in 2013.
His research focuses on optimal control and distributed decision-making in multi-agent robotic systems, with applications to robotic planetary exploration and coordination of fleets of self-driving vehicles for autonomous mobility-on-demand in urban environments.
\end{IEEEbiography}
\jvspace{-2.3em}
\begin{IEEEbiography}[{\includegraphics[width=1in,height=1.2in,clip,keepaspectratio]{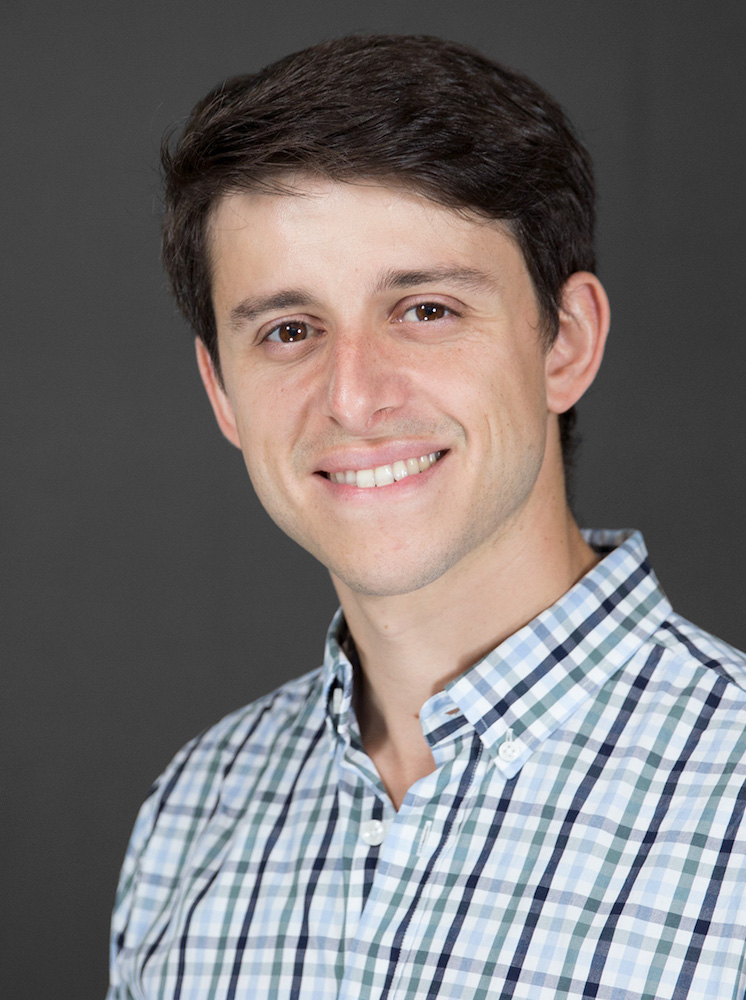}}]{Ramon Iglesias}  is a Ph.D. candidate in Civil and Environmental Engineering at Stanford University under the supervision of Marco Pavone. He develops algorithms and models to control large fleets of self-driving cars. More broadly, his research interests lie at the interplay between software systems and real-world infrastructure. Prior to his Ph.D, Ramon was a software engineer at SunPower. He has a M.S and a B.S. in Civil Engineering from Stanford and UT Austin, respectively.
\end{IEEEbiography}
\jvspace{-2.3em}
\begin{IEEEbiography}[{\includegraphics[width=1in,height=1.2in,clip,keepaspectratio]{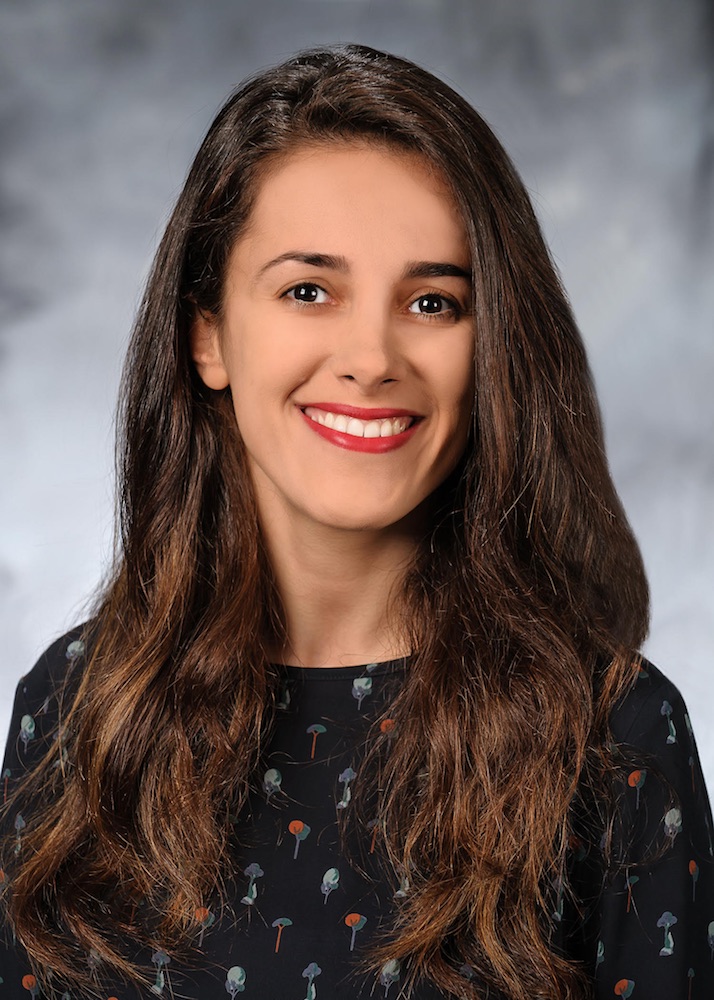}}]{Mahnoosh Alizadeh }  is an assistant professor of Electrical and Computer Engineering at the University of California Santa Barbara. 
Dr. Alizadeh received the B.Sc. degree in Electrical Engineering  from Sharif University of Technology in 2009 and the M.Sc. and Ph.D. degrees from the University of California Davis in 2013 and 2014 respectively, both in Electrical and Computer Engineering. From 2014 to 2016, she was a postdoctoral scholar at Stanford University. Her research interests are focused on designing scalable control and data analytic frameworks and market mechanisms for enabling sustainability and resiliency in societal infrastructure systems, with a particular focus on electric transportation systems. Dr. Alizadeh is a recipient of the NSF CAREER award.
\end{IEEEbiography}
\jvspace{-2.3em}
\begin{IEEEbiography}[{\includegraphics[width=1in,height=1.2in,clip,keepaspectratio]{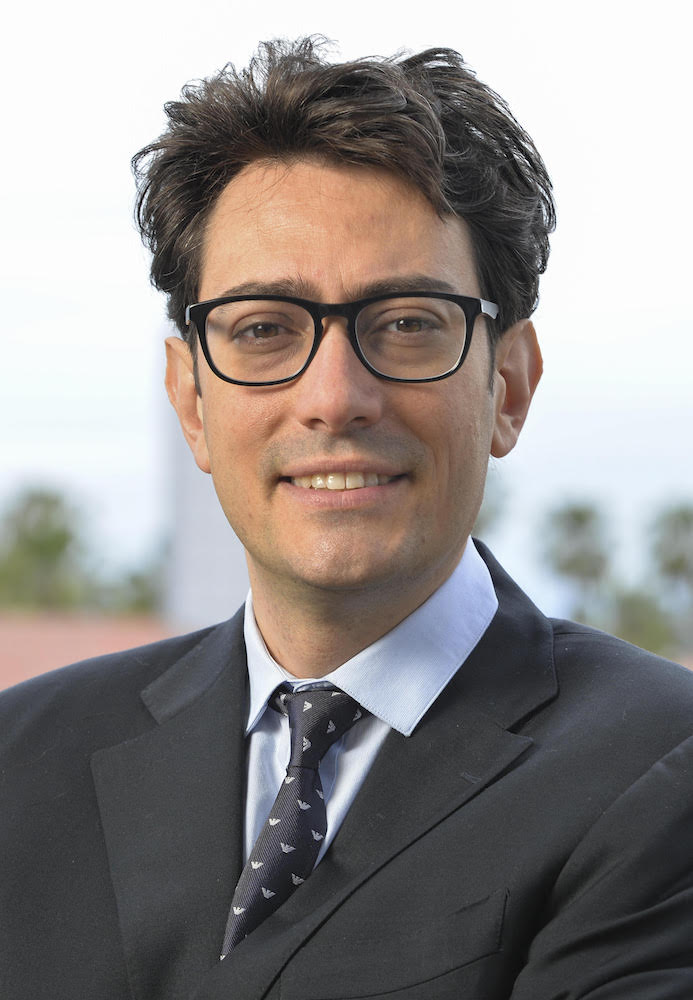}}]{Marco Pavone} is an Associate Professor of Aeronautics and Astronautics at Stanford University, where he is the Director of the Autonomous Systems Laboratory and Co-Director of the Center for Automotive Research at Stanford.
Before joining Stanford, he was a Research Technologist within the Robotics Section at the NASA Jet Propulsion Laboratory.
He received a Ph.D. degree in Aeronautics and Astronautics from the Massachusetts Institute of Technology in 2010. His main research interests are in the development of methodologies for the analysis, design, and control of autonomous systems, with an emphasis on self-driving cars, autonomous aerospace vehicles, and future mobility systems. He is a recipient of a number of awards, including a Presidential Early Career Award for Scientists and Engineers (PECASE), an ONR YIP Award, an NSF CAREER Award, and a NASA Early Career Faculty Award. He was identified by the American Society for Engineering Education (ASEE) as one of America's 20 most highly promising investigators under the age of 40. He is currently serving as an Associate Editor for the IEEE Control Systems Magazine.
\end{IEEEbiography}

\end{document}